\newcolumntype{C}[1]{>{\centering\arraybackslash}m{#1}}
\def\argmin{\mathop{\rm argmin}\nolimits}
\newcommand{\bp}{\boldsymbol{p}}
\newcommand{\bq}{\boldsymbol{q}}
\newcommand{\bw}{\boldsymbol{w}}
\newcommand{\bx}{\boldsymbol{x}}
\newcommand{\btheta}{\boldsymbol{\theta}}
\newtheorem{thm}{Theorem}
\newtheorem{lem}{Lemma}
\newtheorem{cor}{Corollary}
\newtheorem{prop}{Property}
\newtheorem{defn}{Definition}
\newtheorem{ex}{Example}
\newtheorem{rem}{Remark}
\title{\textit{t}-Entropy: A New Measure of Uncertainty with Some Applications\thanks{To appear at the IEEE Symposium on Information Theory (ISIT), 2021.}}
\author[1]{Saptarshi Chakraborty\thanks{Joint first authors contributed equally to this work.}}
\author[2]{Debolina Paul$^\dagger$}
 \author[3]{Swagatam Das\thanks{Correspondence to: \href{mailto:swagatam.das@isical.ac.in}{swagatam.das@isical.ac.in}}}
 \affil[1]{Department of Statistics, University of California, Berkeley}
 \affil[2]{Indian Statistical Institute, Kolkata, India}
 \affil[3]{Electronics and Communication Sciences Unit, Indian Statistical Institute, Kolkata, India}
\date{\vspace{-5ex}}
\begin{document}

\maketitle

\begin{abstract}
   The concept of Entropy plays a key role in Information Theory, Statistics, and  Machine Learning. This paper introduces a new entropy measure, called the \textit{t}-entropy, which exploits the concavity of the inverse-tan function. We analytically show that the proposed \textit{t}-entropy satisfies the prominent axiomatic properties of an entropy measure. We demonstrate an application of the proposed entropy measure for multi-level thresholding of images. We also propose the entropic-loss as a measure of the divergence between two probability distributions, which leads to robust estimators in the context of parametric statistical inference. The consistency and asymptotic breakdown point of the proposed estimator are mathematically analyzed.  Finally, we also show an application of the \textit{t}-entropy to feature weighted data clustering.
\end{abstract}

\section{Introduction}
 The concept of entropy is a very fundamental tool in statistical mechanics, thermodynamics, information sciences, and statistics. In physics, entropy typically refers to the measure of randomness in a physical system. In thermodynamics, it is interpreted as the amount of molecular disorder within a macroscopic system. The \textit{second law of thermodynamics} states that the entropy of an isolated system will never decrease over time. The system spontaneously evolves towards a thermodynamic equilibrium, where it attains its maximum entropy, i.e. a state of maximum disorder. In this paper, we will focus on defining a new entropy function as a measure of uncertainty in an information system.
   
\subsection{A Brief History of Entropy in Information Theory}
In Information Theory, Claude Shannon \citep{shannon1948mathematical} is known as the first to introduce a measure of randomness or uncertainty in a discrete distribution in 1948. Suppose $X$ be a discrete random variable, which takes values in $\mathcal{X}=\{x_1,\dots,x_n\}$ with $P(X=x_i)=p_i$. Shannon's proposed measure, known as Shannon's entropy, is given by $$H_{Shannon}(X)=-\sum_ip_i \log(p_i),$$ where $p_i$'s are the probabilities associated with various realizations of $X$. Shannon's entropy has various interesting properties such as non-negativity, attaining maximum when $p_i$'s are all equal, equals $0$ when the distribution is degenerate and is additive.
\par
Shannon's entropy can also be viewed as the average information contained in a distribution. Consider any point $x \in \mathcal{X}$. If $P(X=x)$ is very small, then the chance of obtaining the value $x$ of the random variable $X$ is also very small. The occurrence of a small probability event contains more information than the occurrence of a large probability event (which is more certain to occur). Thus, the information for an event $\{X=x\}$ should be an increasing function of $\frac{1}{P(X=x)}$. In order to convert this information to bits, Shannon proposed $I(x)=\log_2\big(\frac{1}{P(X=x)}\big)$ as a measure of information of observing $X=x$. Shannon's entropy thus boils down to the average information contained in the random variable i.e. $\mathbb{E}_{X \sim P} [I(X)]=\sum_{i=1}^n p_i \log\big(\frac{1}{p_i}\big)=H_{Shannon}(X)$. 
\par
The extension of Shannon's Entropy for continuous random variables is known as Differential Entropy \citep{cover2012elements}. It is defined as: $$DE(X)=-\int f(x)\log\big(f(x)\big)dx,$$ where $f(x)$ is pdf of the random variable $X$.\par
In 1961, Alfred R\'enyi proposed a generalization of Shannon's entropy, which is known as the R\'enyi entropy \citep{renyi1961measures}. It is defined as: 
$$H_{\alpha}(p_1,p_2,\dots,p_n)=-\frac{1}{1-\alpha}\log\bigg(\sum_{k=1}^n p_k^{\alpha}\bigg),$$
where $\alpha >0$ and $\alpha \neq 1$. It is a generalization in the sense that for $\alpha \rightarrow 1$, the R\'enyi entropy converges to Shannon's entropy.\par
Another famous entropy measure is Tsallis entropy \citep{tsallis1988possible} proposed by Constantino Tsallis in 1988. This measure is a generalization of the standard Boltzmann-Gibbs entropy \citep{jaynes1965gibbs}. Tsallis entropy is defined as: $$S_q(p)=\frac{\kappa}{q-1}(1-\sum_i p_i^q),$$
where, $\kappa$ is the Boltzman's constant and $q$ is a parameter. As we take the limit $q\rightarrow 1$, Tsallis entropy becomes Boltzmann-Gibbs entropy, which is nothing but a constant multiple of the Shannon's entropy.
\par
Other entropies that are frequently used in information theory are Sharma-Mittal entropy \citep{sharma1975new}, Cumulative Residual Entropy (CRE) \citep{rao2004cumulative}, Havrda and Chavrat entropy \citep{havrda1967quantification}, Awad entropy \citep{awad1987application}, and their extensions.
\par
Shannon's entropy became important in quantifying randomness present in diverse scientific fields such as financial analysis \citep{sharpe1998investments}, data compression \citep{salomon2007concise}, statistics \citep{kullback1959statistics}, and information theory \citep{cover1991entropy}. Other entropy measures can also be used in these applications.

\begin{table*}[ht]
\centering
\caption{Some Standard Entropy Measures in Literature Along with the Parameter Values}
\label{entropy}
\resizebox{\textwidth}{!}{
\renewcommand{\arraystretch}{2}
\begin{tabular}{|l|c|c|c|}
 \hline
 \textbf{Entropy} & \textbf{Formula} & \textbf{Parameters} & \textbf{Parameter Space} \\
 \hline
 1.Shannon's Entropy \citep{shannon1948mathematical} &  $-\sum_ip_i\log(p_i)$  & - & - \\
 \hline
 2.Boltzmann-Gibbs Entropy \citep{jaynes1965gibbs} & $-\kappa\sum_ip_i\log(p_i)$ & -& - \\
 \hline
 3.Differential Entropy\citep{cover2012elements} &  $-\int f(x)\log\big(f(x)\big)dx$  & - & - \\
 \hline
 4.R\'enyi Entropy \citep{renyi1961measures} & $-\frac{1}{1-\alpha}\log\bigg(\sum_{k=1}^n p_k^{\alpha}\bigg)$ & $\alpha$& $\alpha >0$, $\alpha \neq 1$\\
 \hline
 5.Tsallis Entropy \citep{tsallis1988possible} & $\frac{\kappa}{q-1}\big(1-\sum_i p_i^q\big)$ & $q$ & $q>0$, $q \neq 1$\\
 \hline
 6. Sharma-Mittal Entropy \citep{sharma1975new} &$\frac{1}{1-\beta}\bigg(\big(\int p(x)^\alpha dx\big)^{\frac{1-\beta}{1-\alpha}}-1\bigg)$ & $\alpha,\beta$ & $\alpha,\beta>0$, $\alpha \neq 1 $, $\beta \neq 1$, $\alpha \neq \beta$ \\
 \hline
 7. Cumulative Residual Entropy (CRE) \citep{rao2004cumulative} & $-\int_0^\infty p(|X|>x)\log p(|X|>x)dx$ & - & -\\
 \hline
 8. Havrda and Chavrat Entropy \citep{havrda1967quantification} & $\frac{1}{2^{(1-\alpha)}-1}\bigg(\int p(x)^\alpha dx -1\bigg)$ & $\alpha$ & $\alpha>0$\\
 \hline
 9. Awad Entropy \citep{awad1987application} & $-\int f(x)\log\bigg(\frac{f(x)}{\sup_xf(x)}\bigg)dx$ & - & -\\
 \hline
 10.$t$-Entropy (proposed) & $\sum_{i} p_i \tan^{-1}\bigg(\frac{1}{p_i^c}\bigg) - \frac{\pi}{4}$ & $c$ & $c>0$\\
 \hline
\end{tabular}
}
\end{table*}

The application of entropy can be found in almost all corners of modern machine learning, from optimal transport to neural networks. In optimal transport, the computation of the Kantorovich distance \citep{villani2008optimal} requires solving of a linear program, which can be computationally intensive. The introduction of an entropy-based regularizer results in fixed-point iteration \citep{cuturi2013sinkhorn}, which is generally faster than the linear program. The application of entropic regularizers can also be found in semi-supervised learning \citep{grandvalet2005semi,audiffren2015maximum} and clustering \citep{jing2007entropy,chakraborty2020entropy,paul2020bayesian,chakraborty2020automated}. Entropy has been traditionally used in decision trees \cite{wang1984analysis} as an impurity measure for the nodes. Table \ref{entropy} discusses some of the standard entropies used in literature along with their parameter values and puts the proposed entropy in context. 
\paragraph{Motivation} As we have already discussed, Shannon's entropy can be viewed as the average information contained in the random variable. The information in the occurrence of an event $A$ is defined as $\log \big(\frac{1}{P(A)}\big)$. Despite the usefulness and interpretability of the $\log(\cdot)$, we note that it is unbounded and is very unstable near the value 0. We argue that information in an event should not only be finite but also should be bounded since one cannot hope to obtain infinite information by observing trials of a random variable, which is on finite support. To define a new entropy, one has to satisfy all the axiomatic requirements given by Shannon and Khinchin (see section \ref{axiom} for more details). For this purpose, we will define the information contained in an event $A$ to be $g\big(\frac{1}{P(A)}\big)$, where, $g(\cdot)$ is bounded and concave. Moreover, the domain of definition of $g(\cdot)$ must be the entire positive real line. A function which satisfies all the aforementioned properties is $\tan^{-1}(\cdot)$. We also know that the information contained in a probability one event is zero. In order to incorporate that we define our information as 
\begin{equation}
    I(A)=\tan^{-1}\bigg(\frac{1}{P(A)}\bigg)-\frac{\pi}{4}.
\end{equation}
The entropy, which is defined as the average information, becomes $H_c(X)=\mathbb{E}_{X \sim P} I(X)=\sum_{i=1}^n p_i \tan^{-1}\Big(\frac{1}{p_i}\Big)-\frac{\pi}{4}$, where, $\sum_{i=1}^np_i=1$. In order to generalize this further, we define the $t$-entropy for a probability vector $\bp$ as follows. 
\begin{equation}
    H_c(\bp)=\sum_{i=1}^n p_i \tan^{-1}\bigg(\frac{1}{p_i^c}\bigg) - \frac{\pi}{4},
\end{equation}
where $c$ is a positive constant. 

In what follows we summarize our main contributions:
\begin{itemize}
    \item We propose a new entropy with the notion of a new measure of information, which increases with increase in the amount of information and becomes saturated once the full information is known.
    \item We show analytically show that our proposed entropy satisfies all the prominent axioms of an entropy measure.
    \item Through extensive experiments, we give an application of our proposed entropy in the context of Image Segmentation. We show that the algorithms perform significantly better in the context of our proposed entropy compared to other existing ones. 
    \item We also provide an entropic-loss-based divergence and propose an estimator based on this divergence. We theoretically prove the consistency property of this estimator and also explore robustness of the same.
    \item This entropy is incorporated with the Entropy Weighted $k$-Means clustering formulation by \cite{jing2007entropy} and is shown to have superior performance in terms of standard cluster validation indices on benchmark datasets. All the relevant codes used in this paper can be downloaded from \href{https://github.com/DebolinaPaul/t-entropy}{https://github.com/DebolinaPaul/t-entropy}.
\end{itemize}
\par
\section{Background}
\subsection{Probability Spaces and Random Variables}\label{prob}
In this paper, we consider a finite probability space $(\Omega,\mathcal{F},P)$. Here, $\Omega$ is a finite set and $\mathcal{F}$ is the power set of $\Omega$, which gives us a $\sigma$-algebra. $P$ is a probability function $P: \mathcal{F} \to [0,1]$  defined on it. In this context, we can define a random variable as a function from $\Omega$ to $\mathbb{R}$, i.e. $X:\Omega \to \mathbb{R}$. For any set $A \subset \mathbb{R}$, one can define $\{X \in A\}:=\{\omega \in \Omega: X(\omega) \in A\}$. The distribution of $X$ is written as $P_X$. Note that $P_X: \mathbb{R} \to [0,1]$ such that $P_X(x)=P[X=x]$. For two random variables $X$ and $Y$ defined on the same probability space $(\Omega,\mathcal{F},P)$, the joint distribution function of $X$ and $Y$ is defined by the function $P_{XY}:\mathbb{R}^2 \to [0,1]$ such that, $P_{XY}(x,y)=P[\{X=x\} \cap \{Y=y\}]$. This definition can be similarly extended for more than two random variables. Also, we write $P(X|\Lambda)(x) = P[X =x | \Lambda]$ and $P_{X|Y} (x|y) = P_{X|Y=y} (x) = P[X =x | Y = y]$ for the respective conditional distributions (conditioned on an event $\Lambda$ and a random variable $Y$). More details about probability spaces and random variables can be found in \cite{gut2013probability}.
In a more general context, we define a distribution on a finite set $\mathcal{X}$ to be a function $p:\mathcal{X} \to [0,1]$ such that $\sum_{x \in \mathcal{X}}p(x)=1.$ 
\subsection{Axiomatic Definition and Properties}\label{axiom}
We take the axiomatic approach for defining the term ``entropy'' \citep{khinchin2013mathematical}. Let $X$ be a discrete random variable, taking $n$ distinct values. Without loss of generality, we may assume that these are the integers $1, 2,\dots, n$. Let us use some standard notations and abbreviations. We denote $P(X=i)$ by $p_i$. We want to represent the randomness of within this distribution to be represented as a single number $H(X)$, which we will call \textit{entropy} of $X$. We define, by way of abbreviation, the joint entropy of a two-component random variable $(X,Y)$ by $H(X,Y)$ and the entropy of the conditional distribution $P(Y|X=x)$ by $H(Y|X=x)$.
\par
The axioms as referred to in \citep{khinchin2013mathematical,nambiar1992axiomatic,chakrabarti2005shannon} are as follows:
\begin{enumerate}
    \item $H(X)$ depends only on the probability distribution of $X$, i.e. we can
change the labels of the events as much as we like without changing the value of the entropy.
    \item For a given $n$, $H(X)$ is maximal, when $p_i=\frac{1}{n}\forall i\in \{1,\dots,n\}$, i.e. the discrete uniform distribution has maximal entropy.
    \item $H_{n+1}(p_1,\dots,p_n,0)=H_{n}(p_1,\dots,p_n) \forall n\geq 1$, i.e., event of probability zero does not contribute to the entropy.
    \item $H(X,Y)\leq H(X)+H(Y|X)$, which is called the subadditivity property. 
\end{enumerate}
These are known as the Khinchin's Axioms \citep{khinchin2013mathematical,suyari2004generalization} for entropy. 

\section{Definition and Properties of the \textit{t}-Entropy}
\label{S2}
In this section, we formally define the $t$-entropy. We also state and prove some of its properties and show that it satisfies all the axioms of an entropic function, establishing that $t$-entropy is indeed a valid entropy.
\subsection{Formulation of the new entropy}\label{formulation}
We first define $t$-entropy for a probability vector $\bp$ in definition \ref{d:1}. We subsequently extend this definition to finite valued random variable in definition \ref{d:2}. The joint entropy and conditional entropy for two finite valued random variables are defined in definitions \ref{d:3} and \ref{d:4} respectively.  
\begin{defn}
\label{d:1}
Let $\bp=(p_1,\dots,p_n)$ be a probability vector defined on the set $\mathcal{X}=\{x_1,\dots,x_n\}$. The t-entropy is defined as 
$$H_c(\bp)=\sum_{i=1}^n p_i \tan^{-1}\bigg(\frac{1}{p_i^c}\bigg) - \frac{\pi}{4}.$$
\end{defn}

We will now define entropy corresponding to a random variable (taking values in a finite set).
\begin{defn}
\label{d:2}
Let $X$ be a random variable taking values in a finite set $\mathcal{X}$. Then the entropy of $X$ is defined by
$$H_c(X)=\sum_{x \in \mathcal{X}} p(x) \tan^{-1}\bigg(\frac{1}{p(x)^c}\bigg) - \frac{\pi}{4}.$$
\end{defn}
Similarly we define the joint entropy of two random variables as follows.
\begin{defn}
\label{d:3}
Let $X$ and $Y$ be two random variables taking values $\mathcal{X}$ and $\mathcal{Y}$, which are both finite sets. Then joint entropy of $X$ and $Y$ is defined by
$$H_c(X,Y)=\sum_{x \in \mathcal{X}}\sum_{y \in \mathcal{Y}} p(x,y) \tan^{-1}\bigg(\frac{1}{p(x,y)^c}\bigg) - \frac{\pi}{4}.$$
\end{defn}

The conditional entropy of two random variables is defined as follows.
\begin{defn}
\label{d:4}
Let $X$ and $Y$ be two random variables taking values $\mathcal{X}$ and $\mathcal{Y}$, which are both finite sets. Then conditional entropy of $X$ given $Y$ is defined by
$$H_c(X|Y)=\sum_{x \in \mathcal{X}}\sum_{y \in \mathcal{Y}} p(x,y) \tan^{-1}\bigg(\frac{1}{p(x|y)^c}\bigg) - \frac{\pi}{4}.$$
\end{defn}

\begin{defn}
\label{cor1}
Let $X$ and $Y$ be two random variables taking values $\mathcal{X}$ and $\mathcal{Y}$, which are both finite sets. The entropy for the random variable $X|Y=y$ is given by 
$$H_c(X|Y=y)=\sum_{x \in \mathcal{X}}p(x|y) \tan^{-1}\bigg(\frac{1}{p(x|y)^c}\bigg) - \frac{\pi}{4}$$
\end{defn}
\begin{figure}
    \centering
    \includegraphics[height=0.3\textwidth,width=0.5\textwidth]{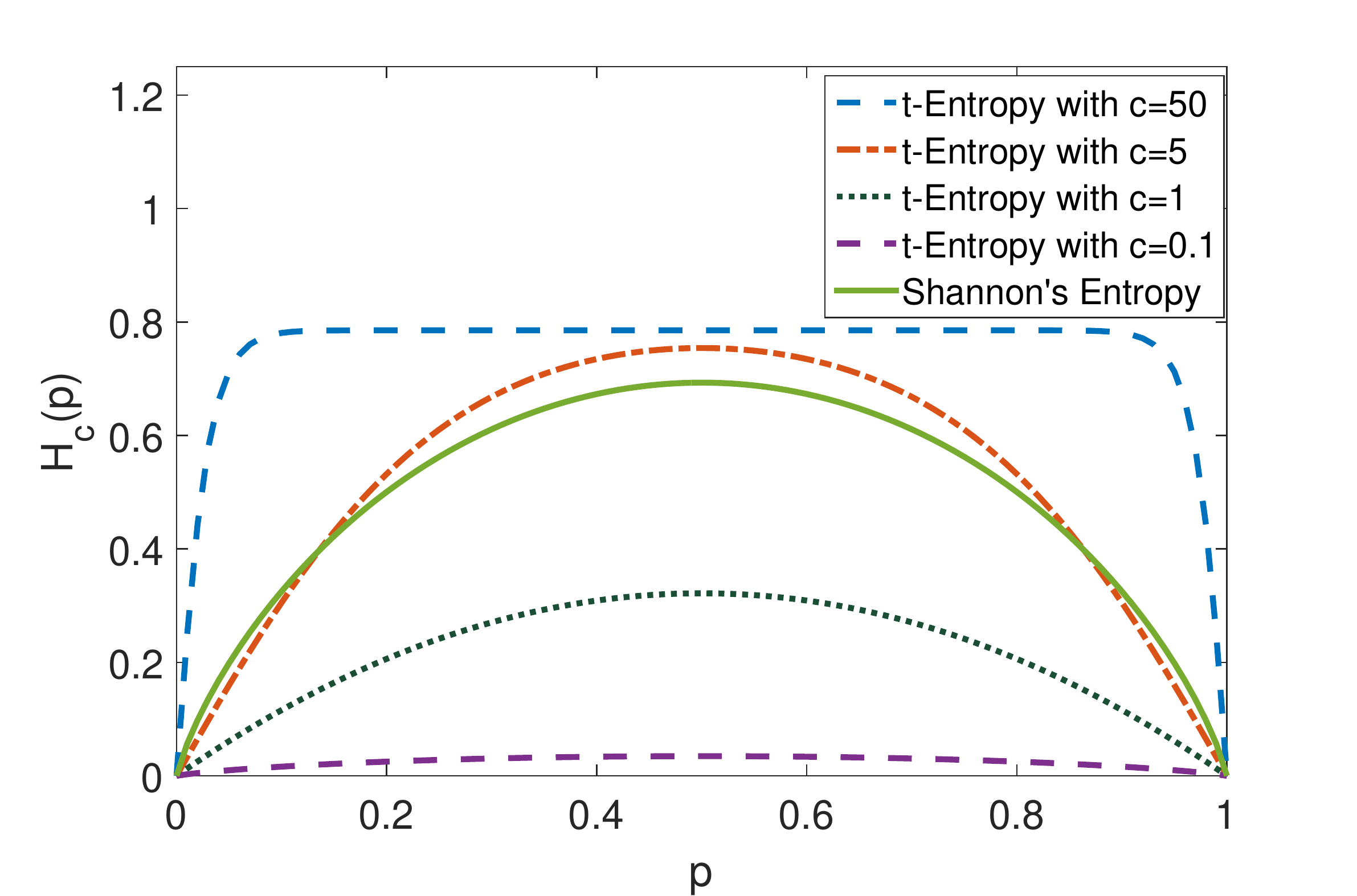}
    \caption{The plot of $H_c(p)$ for various values of $c$.}
    \label{fig:e_plot}
\end{figure}

\begin{ex}
\normalfont
Let $X$ be a Bernoulli random variable with parameter $p$. Then, the $t$-Entropy of $X$ is given by $H_c(p)=H_c(X)=p \tan^{-1}\bigg(\frac{1}{p^c}\bigg)+(1-p) \tan^{-1}\bigg(\frac{1}{(1-p)^c}\bigg)$. In Fig. \ref{fig:e_plot}, we plot $H_c(p)=H_c(X)$ against $p$ for various values of $c$. It can be easily seen from Fig. \ref{fig:e_plot} that $H_c(p)$ attains its maxima at $p=\frac{1}{2}$ and minima at the boundary points $p=0,1$. Also note that as $c$ increases, $H_c(p)$ also increases for $p \in (0,1)$ and in limit approaches $\frac{\pi}{4}$ except for the points $p=0,1$, where $H_c(p)=0$ for all $c>0$.
\end{ex}
\subsection{Properties of the proposed entropy function}\label{property}
In this section, we discuss some of the properties of $t$-entropy. Before we proceed, let us first state the following lemmas. The proof of all the lemmas are given in the Appendix \ref{a1}.

\begin{lem}
\label{l:1}
The function $f(x)=\tan^{-1}\left(\frac{1}{x}\right)$ is convex on $[0,\infty)$.
\end{lem}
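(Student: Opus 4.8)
The plan is to establish convexity through the second-derivative test on the open interval $(0,\infty)$ and then to extend the conclusion to the closed interval $[0,\infty)$ by a continuity argument at the boundary point $x=0$, since $1/x$ is not defined there and the derivative computation cannot apply directly.

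First I would differentiate $f(x)=\tan^{-1}\left(\frac{1}{x}\right)$ on $(0,\infty)$. Writing $u=1/x$ and using $\frac{d}{du}\tan^{-1}(u)=\frac{1}{1+u^2}$ together with the chain rule, the inner factor $-1/x^2$ cancels against the $x^2$ produced when simplifying $1/(1+x^{-2})$, leaving the clean expression
$$f'(x)=-\frac{1}{1+x^2}.$$
Differentiating once more gives
$$f''(x)=\frac{2x}{(1+x^2)^2}.$$
Because $x>0$ makes the numerator strictly positive while the denominator is always positive, one has $f''(x)>0$ on all of $(0,\infty)$, so $f$ is (in fact strictly) convex on the open interval.

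The only delicate point, and thus the anticipated main obstacle, is the endpoint $x=0$, where neither $f$ nor its derivatives are given by the formulas above. Here I would observe that $\lim_{x\to 0^+}\tan^{-1}\left(\frac{1}{x}\right)=\frac{\pi}{2}$, so defining $f(0)=\frac{\pi}{2}$ extends $f$ continuously to $[0,\infty)$. I would then invoke the standard fact that a function continuous on $[0,\infty)$ and having nonnegative second derivative on the interior $(0,\infty)$ is convex on the whole of $[0,\infty)$; this upgrades the interior convexity to the closed domain and completes the argument. In short, the interior step is a one-line derivative check, and the care is entirely in justifying inclusion of the boundary via continuity rather than through differentiation.
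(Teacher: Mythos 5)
Your proof is correct and follows essentially the same route as the paper's: compute $f'(x)=-\frac{1}{1+x^2}$ and $f''(x)=\frac{2x}{(1+x^2)^2}\geq 0$, and conclude convexity. The only difference is your careful handling of the endpoint $x=0$ (extending $f$ by continuity with $f(0)=\frac{\pi}{2}$ and invoking the standard closed-interval convexity fact), which the paper glosses over by asserting $f''(x)\geq 0$ ``for all $x\geq 0$''; this is a welcome refinement of the same argument, not a different one.
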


\begin{lem}
\label{l:2}
For any $c>0$, the function $f(x)=x \tan^{-1}(\frac{1}{x^c})$ is concave on $[0,1]$. 
\end{lem}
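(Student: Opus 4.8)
The plan is to verify concavity directly through the sign of the second derivative, showing $f''(x) < 0$ on the open interval $(0,1)$ and then extending to the closed interval $[0,1]$ by continuity (noting that $f(x)\to 0$ as $x\to 0^+$ and $f(1)=\tfrac{\pi}{4}$). Since the only delicate issue is the behaviour at the endpoints, essentially all of the work is a clean computation on the interior.

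First I would compute the first derivative. Writing $f(x)=x\tan^{-1}(x^{-c})$ and using $\frac{d}{dx}\tan^{-1}(x^{-c})=-\frac{c\,x^{c-1}}{1+x^{2c}}$, the product rule gives
\[
f'(x)=\tan^{-1}\!\left(\frac{1}{x^c}\right)-\frac{c\,x^c}{1+x^{2c}}.
\]
Differentiating once more, the $\tan^{-1}$ term again contributes $-\frac{c\,x^{c-1}}{1+x^{2c}}$, while the quotient rule applied to $\frac{c\,x^c}{1+x^{2c}}$ produces a term which, after collecting everything over the common denominator $(1+x^{2c})^2$, yields the factored expression
\[
f''(x)=-\frac{c\,x^{c-1}}{(1+x^{2c})^2}\Big[(1+c)+(1-c)\,x^{2c}\Big].
\]

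Since for $x\in(0,1)$ and $c>0$ the prefactor $-\frac{c\,x^{c-1}}{(1+x^{2c})^2}$ is strictly negative, it remains only to show that the bracket $B(x):=(1+c)+(1-c)\,x^{2c}$ is positive. I would split into two cases. If $c\le 1$, both $(1+c)$ and $(1-c)\,x^{2c}$ are nonnegative, so $B(x)>0$ immediately. If $c>1$, then $1-c<0$ and $x^{2c}\in(0,1)$ force $(1-c)\,x^{2c}>(1-c)$, whence $B(x)>(1+c)+(1-c)=2>0$. In either case $B(x)>0$, so $f''(x)<0$ on $(0,1)$, which establishes strict concavity there and hence concavity on $[0,1]$.

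The computation is essentially routine; the one step that must be carried out carefully is the algebraic simplification collapsing $f''$ into the single factored form, since a sign error there would derail the argument. After that, the positivity of $B(x)$ reduces to the elementary case split on whether $c\le 1$, which I expect to be the cleanest part of the proof.
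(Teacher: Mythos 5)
Your proof is correct and takes essentially the same route as the paper's: both compute $f''(x) = -\frac{c\,x^{c-1}}{(1+x^{2c})^2}\bigl[(1+c)+(1-c)x^{2c}\bigr]$ and conclude concavity from the sign of this expression on the interval. Your write-up is in fact slightly more careful than the paper's, which states the factored form (with some typographical garbling of the bracket) and simply asserts $f''(x)\leq 0$, whereas you supply the case split on $c\leq 1$ versus $c>1$ and the continuity argument at the endpoints.
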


We will first prove the non-negativity property of the proposed $t$-entropy.
\begin{prop}
\label{p:1}
(Non-negativity) Let $\bp=(p_1,\dots,p_n)$ be a probability vector defined on the set $\mathcal{X}=\{x_1,\dots,x_n\}$, then $H_c(P) \geq 0$.
\end{prop}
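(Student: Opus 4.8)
The plan is to prove the equivalent inequality $\sum_{i=1}^n p_i \tan^{-1}\!\left(\frac{1}{p_i^c}\right) \geq \frac{\pi}{4}$, after which subtracting $\frac{\pi}{4}$ from both sides yields $H_c(\bp) \geq 0$ immediately. The key observation is purely monotonicity-based and does not even require the concavity statements in Lemmas \ref{l:1} and \ref{l:2}: since each $p_i \in [0,1]$ and $c > 0$, we have $p_i^c \leq 1$, and therefore $\frac{1}{p_i^c} \geq 1$ for every index $i$ with $p_i > 0$.

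Because $\tan^{-1}(\cdot)$ is strictly increasing on $[0,\infty)$, this gives the termwise bound $\tan^{-1}\!\left(\frac{1}{p_i^c}\right) \geq \tan^{-1}(1) = \frac{\pi}{4}$. Multiplying through by the nonnegative weight $p_i$ and summing over $i$, I would obtain
\[
\sum_{i=1}^n p_i \tan^{-1}\!\left(\frac{1}{p_i^c}\right) \;\geq\; \frac{\pi}{4}\sum_{i=1}^n p_i \;=\; \frac{\pi}{4},
\]
where the final equality uses the normalization $\sum_{i=1}^n p_i = 1$. Substituting this into Definition \ref{d:1} then gives $H_c(\bp) \geq \frac{\pi}{4} - \frac{\pi}{4} = 0$, as required.

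The only point requiring a word of care is the boundary case $p_i = 0$, where $\tan^{-1}\!\left(\frac{1}{p_i^c}\right)$ is not literally defined. I would dispose of this by adopting the standard convention $p_i \tan^{-1}\!\left(\frac{1}{p_i^c}\right) = 0$ when $p_i = 0$, which is justified by $\lim_{p \to 0^+} p \tan^{-1}\!\left(\frac{1}{p^c}\right) = 0$ (the arctan factor stays bounded above by $\frac{\pi}{2}$ while the weight vanishes) and is moreover consistent with Axiom 3, stating that zero-probability events do not contribute to the entropy. Such terms add nothing to the sum and hence only reinforce the inequality. I do not anticipate any genuine obstacle: the argument is essentially a one-line weighted average of the elementary bound $\tan^{-1}\!\left(\frac{1}{p_i^c}\right) \geq \frac{\pi}{4}$, and the sole subtlety is stating the $p_i = 0$ convention clearly so that the summation is well defined.
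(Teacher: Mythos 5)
Your proof is correct and follows essentially the same route as the paper's: both rest on the elementary bound $p_i \leq 1 \implies \frac{1}{p_i^c} \geq 1 \implies \tan^{-1}\left(\frac{1}{p_i^c}\right) \geq \tan^{-1}(1) = \frac{\pi}{4}$, combined with the normalization $\sum_{i=1}^n p_i = 1$. Your explicit treatment of the boundary case $p_i = 0$ is a small tidiness improvement over the paper's version, which leaves that convention implicit, but it does not change the substance of the argument.
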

\begin{proof}

\begin{align*}
H_c(\bp) & =\sum_{i=1}^n p_i \tan^{-1}\bigg(\frac{1}{p_i^c}\bigg) - \frac{\pi}{4}\\
& = \sum_{i=1}^n p_i \bigg[\tan^{-1}\bigg(\frac{1}{p_i^c}\bigg) - \frac{\pi}{4}\bigg]\\
& \geq 0.
\end{align*}

The last inequality follows from the fact that the function $f(x)=\tan^{-1}(x)$ is an increasing function of $x$ in $[0, \infty)$ and for all $i \in \{1,\dots,n\}$, $p_i \leq 1 \implies \frac{1}{p_i^c} \geq 1 \implies \tan^{-1}\bigg(\frac{1}{p_i^c}\bigg) \geq \tan^{-1}(1)=\frac{\pi}{4}.  $
\end{proof}

We will now prove that $t$-entropy is continuous, so that changing the values of the probabilities by a small amount change the entropy by a small amount.
\begin{prop}
(Continuity) For any probability vector $\bp$, The function $H_c(\bp)$ is a continuous function of $\bp$.
\end{prop}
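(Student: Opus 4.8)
The plan is to reduce the claim to the continuity of a single scalar function and then invoke the fact that a finite sum of continuous functions is continuous. Observe that $H_c(\bp) = \sum_{i=1}^n g(p_i) - \frac{\pi}{4}$, where $g(x) = x \tan^{-1}\!\left(\frac{1}{x^c}\right)$. Since each coordinate projection $\bp \mapsto p_i$ is continuous and subtracting the constant $\frac{\pi}{4}$ does not affect continuity, it suffices to show that $g$ is continuous on $[0,1]$.

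On the open interval $(0,1]$ this is immediate: the map $x \mapsto x^c$ is continuous and strictly positive there, $x \mapsto \frac{1}{x}$ is continuous away from $0$, and $\tan^{-1}$ is continuous everywhere, so $g$ is a product and composition of continuous functions. The only delicate point — and the place where the argument could fail for a less well-behaved choice of information function — is the boundary value $x = 0$, where $\frac{1}{x^c} \to \infty$. Here I would exploit the boundedness of $\tan^{-1}$: for every $x \in (0,1]$ we have $0 < \tan^{-1}\!\left(\frac{1}{x^c}\right) < \frac{\pi}{2}$, and therefore $0 \le g(x) \le \frac{\pi}{2}\, x$. Letting $x \to 0^+$ and applying the squeeze theorem yields $\lim_{x \to 0^+} g(x) = 0$, which agrees with the value $g(0) = 0$ fixed by the convention $0 \cdot \tan^{-1}(\infty) = 0$ (the same convention that guarantees a probability-zero outcome contributes nothing, cf. Axiom 3). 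Hence $g$ is continuous at $0$ as well.

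Combining the two cases shows that $g$ is continuous on all of $[0,1]$, so $H_c(\bp)$, being a finite sum of such continuous terms minus a constant, is continuous on the probability simplex. I expect no genuine obstacle in this proof; the single subtlety worth stating explicitly is the $p_i \to 0$ limit, where the unboundedness of $\frac{1}{p_i^c}$ is absorbed by the boundedness of $\tan^{-1}$, so that each summand extends continuously to the boundary of the simplex.
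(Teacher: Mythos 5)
Your proof is correct and takes essentially the same approach as the paper, whose entire proof is the one-line assertion that the result ``easily follows from the continuity of the function $f(x) = x\tan^{-1}(x)$ on $[0,1]$'' (the intended function being $x\tan^{-1}(1/x^c)$). Your version is the same reduction, but properly completed: you supply the one genuinely non-trivial step --- continuity at the endpoint $x=0$ via the bound $0 \le g(x) \le \frac{\pi}{2}x$ and the convention $g(0)=0$ --- which the paper leaves entirely implicit.
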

\begin{proof}
The result easily follows from the continuity of the function $f(x)=x\tan^{-1}(x)$ on $[0,1]$.
\end{proof}

Property \ref{p:sym} tell us that the $t$-entropy remains unchanged if the outcomes are reordered. This proves axiom (1) of Section \ref{axiom}.
\begin{prop}
\label{p:sym}
(Symmetry) For any probability vector $\bp=(p_1,\dots,p_n)$ and any permutation $\sigma:\{1,\dots,n\} \to \{1,\dots,n\}$, $H_c(\bp)=H_c(\sigma(\bp))$.
\end{prop}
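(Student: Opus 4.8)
The plan is to exploit the fact that $H_c(\bp)$ is defined purely as a finite sum over the components of the probability vector, plus a constant. Because finite addition is commutative and associative, permuting the order in which the summands appear cannot change the value of the sum, and the additive constant $-\tfrac{\pi}{4}$ is manifestly unaffected by any relabelling. So the entire content of the statement reduces to the observation that a permutation merely reindexes the summation.

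First I would write out the definition applied to the permuted vector. Writing $\sigma(\bp)=(p_{\sigma(1)},\dots,p_{\sigma(n)})$, we have
\begin{equation*}
H_c(\sigma(\bp))=\sum_{i=1}^n p_{\sigma(i)}\,\tan^{-1}\!\bigg(\frac{1}{p_{\sigma(i)}^c}\bigg)-\frac{\pi}{4}.
\end{equation*}
Next I would make the key remark that since $\sigma:\{1,\dots,n\}\to\{1,\dots,n\}$ is a bijection, as $i$ ranges over $\{1,\dots,n\}$ the image $\sigma(i)$ ranges over the same set exactly once. Hence, setting $j=\sigma(i)$ and using that summation over a finite index set is invariant under any bijective change of index, the sum above equals $\sum_{j=1}^n p_{j}\tan^{-1}(1/p_{j}^c)$, which is precisely the sum appearing in $H_c(\bp)$.

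There is essentially no obstacle here: the result is immediate from the symmetric (summation) form of the definition, and the only point worth stating explicitly is that $\sigma$ being a bijection is exactly what licenses the reindexing. I would therefore keep the argument to the two displayed lines above and conclude that $H_c(\sigma(\bp))=H_c(\bp)$, which verifies Axiom~(1) of Section~\ref{axiom}.
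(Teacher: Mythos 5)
Your proposal is correct and is essentially the same argument as the paper's proof: both reduce the statement to the invariance of a finite sum under a bijective reindexing of its terms. Your write-up is in fact slightly more careful, since you use the notation $p_{\sigma(i)}$ and make the change of index $j=\sigma(i)$ explicit, whereas the paper writes $\sigma(p_i)$ and asserts the equality of the two sums directly.
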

\begin{proof}
We have,
\begin{align*}
    H_c(\sigma(\bp)) & =\sum_{i=1}^n \sigma(p_i) \tan^{-1}\bigg(\frac{1}{\sigma(p_i)^c}\bigg) -\frac{\pi}{4}\\
    & =\sum_{i=1}^n p_i \tan^{-1}\bigg(\frac{1}{p_i^c}\bigg)-\frac{\pi}{4}\\
    & = H_c(\bp).
\end{align*}
\end{proof}

We will now explore an interesting property of the $t$-entropy. Property \ref{p:2} states that for any $c>0$, the $t$-Entropy is bounded above by $\frac{\pi}{4}.$ 

\begin{prop}
 \label{p:2}
 (Boundedness) Let $\bp=(p_1,\dots,p_n)$ be a probability vector defined on the set $\mathcal{X}=\{x_1,\dots,x_n\}$, for any $n \in \mathbb{N}$ and any $c>0$, $H_{c}(\bp) \leq \frac{\pi}{4}$.
\end{prop}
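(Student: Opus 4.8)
The plan is to exploit the uniform upper bound on the inverse-tangent function, mirroring the argument already used for non-negativity (Property \ref{p:1}). The key observation is that $\tan^{-1}(y) \leq \frac{\pi}{2}$ for every finite $y \geq 0$, since $\frac{\pi}{2}$ is the horizontal asymptote of $\tan^{-1}$ on $[0,\infty)$. Applying this with $y = 1/p_i^c$ whenever $p_i > 0$ immediately controls each summand, and the constraint $\sum_i p_i = 1$ does the rest.

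Concretely, I would first write $H_c(\bp) + \frac{\pi}{4} = \sum_{i=1}^n p_i \tan^{-1}\big(\frac{1}{p_i^c}\big)$ and split the sum according to whether $p_i = 0$ or $p_i > 0$. The terms with $p_i = 0$ contribute nothing: the product $p_i \tan^{-1}(1/p_i^c)$ is taken to be $0$ by continuity of $x \mapsto x\tan^{-1}(1/x^c)$ at $x=0$ (cf.\ Lemma \ref{l:2}). For each $p_i > 0$ the quantity $1/p_i^c$ is finite, hence $\tan^{-1}(1/p_i^c) \leq \frac{\pi}{2}$. Multiplying by $p_i \geq 0$ and summing gives
\[
\sum_{i=1}^n p_i \tan^{-1}\Big(\frac{1}{p_i^c}\Big) \leq \frac{\pi}{2}\sum_{i=1}^n p_i = \frac{\pi}{2}.
\]
Subtracting $\frac{\pi}{4}$ yields $H_c(\bp) \leq \frac{\pi}{4}$, as required.

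There is essentially no hard step here; the only point requiring a little care is the degenerate case $p_i = 0$, where the argument of $\tan^{-1}$ blows up but the prefactor vanishes, so one should appeal to the limiting value $\lim_{x \to 0^+} x\tan^{-1}(1/x^c) = 0$ rather than naively bounding $\tan^{-1}$ at an infinite argument. If a strict inequality or a sharper finite-$n$ estimate is desired, I would instead invoke the concavity of $g(x) = x\tan^{-1}(1/x^c)$ from Lemma \ref{l:2} and apply Jensen's inequality to obtain $\sum_i g(p_i) \leq n\, g(1/n) = \tan^{-1}(n^c)$, giving the tighter bound $H_c(\bp) \leq \tan^{-1}(n^c) - \frac{\pi}{4} < \frac{\pi}{4}$ and showing that the value $\frac{\pi}{4}$ is only approached in the limit $n \to \infty$.
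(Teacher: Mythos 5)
Your proof is correct and takes essentially the same route as the paper: bound each $\tan^{-1}\big(\frac{1}{p_i^c}\big)$ by its horizontal asymptote $\frac{\pi}{2}$ and use $\sum_{i=1}^n p_i = 1$. Your extra care with the $p_i = 0$ case (and the optional Jensen refinement $H_c(\bp) \leq \tan^{-1}(n^c) - \frac{\pi}{4}$) goes beyond the paper's one-line argument, but the core idea is identical.
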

\begin{proof}
We have, $H_c(\bp)=\sum_{i=1}^n p_i \tan^{-1}\bigg(\frac{1}{p_i}\bigg) - \frac{\pi}{4} \leq  \frac{\pi}{2}\sum_{i=1}^n p_i - \frac{\pi}{4}=\frac{\pi}{4}. $
\end{proof}

The following property asserts the concavity of the $t$-entropy.

\begin{prop}
\label{p:3}
 Let $\bp \in [0,1]^n$ and $c>0$, then $H_c(\bp)$ is concave.
\end{prop}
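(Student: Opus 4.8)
The plan is to exploit the separable structure of $H_c$ together with the one-dimensional concavity already established in Lemma \ref{l:2}. Writing $f(x) = x\tan^{-1}(1/x^c)$, Definition \ref{d:1} expresses the entropy as
\begin{equation*}
H_c(\bp) = \sum_{i=1}^n f(p_i) - \frac{\pi}{4},
\end{equation*}
so that, up to an additive constant, $H_c$ is a sum of $n$ terms, each depending on a single coordinate of $\bp$. The domain $[0,1]^n$ is a convex (indeed box-shaped) subset of $\Real^n$, which is the natural setting on which to verify concavity.

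The key step is to combine the coordinatewise concavity from Lemma \ref{l:2} with the elementary fact that a separable sum of concave functions is concave. Concretely, I would fix two vectors $\bp,\bq \in [0,1]^n$ and a scalar $\lambda \in [0,1]$, and apply Lemma \ref{l:2} in each coordinate:
\begin{equation*}
f\big(\lambda p_i + (1-\lambda) q_i\big) \geq \lambda f(p_i) + (1-\lambda) f(q_i), \qquad i = 1,\dots,n.
\end{equation*}
Summing these $n$ inequalities and subtracting the constant $\frac{\pi}{4}$ from both sides gives
\begin{equation*}
H_c\big(\lambda \bp + (1-\lambda)\bq\big) \geq \lambda H_c(\bp) + (1-\lambda) H_c(\bq),
\end{equation*}
which is precisely the defining inequality for concavity of $H_c$ on $[0,1]^n$.

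An equivalent route, should a second-order argument be preferred, is to observe that because $H_c$ is separable its Hessian is diagonal, with $i$-th diagonal entry $f''(p_i)$; Lemma \ref{l:2} guarantees $f'' \leq 0$ on $[0,1]$, so the Hessian is negative semidefinite throughout the domain, again yielding concavity. Either way, essentially all of the analytic work is already packaged in Lemma \ref{l:2}, so the present argument is purely structural and I anticipate no genuine obstacle. The only point requiring care is to apply the convex-combination inequality on the full box $[0,1]^n$ (the domain stated in the proposition), rather than restricting to the probability simplex, since concavity here is asserted for $H_c$ as a function on its entire coordinate domain.
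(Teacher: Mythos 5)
Your proposal is correct and takes essentially the same approach as the paper: both reduce the claim to Lemma \ref{l:2} via the separable structure of $H_c$, and your second-order variant (diagonal Hessian with entries $f''(p_i) \leq 0$) is precisely the paper's own proof. If anything, your version is slightly tighter, since the direct convex-combination argument needs no differentiability and your phrase ``negative semidefinite'' is the accurate conclusion where the paper loosely writes ``negative definite.''
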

\begin{proof}
We have, $\frac{\partial H_c(\bp)}{\partial p_i} = \tan^{-1}(\frac{1}{p_i^c})-\frac{c p_i^c}{1+p_i^{2c}}$. Thus, for $i \neq j$, we have, $\frac{\partial^2 H_c(\bp)}{\partial p_j \partial p_i}=0$. Also from Lemma \ref{l:2}, we have $\frac{\partial^2 H_c(\bp)}{\partial^2 p_i} \leq 0$ for all $p_i \in [0,1]$, $i=1, \dots, n$. Thus the Hessian matrix is negative definite for $\bp \in [0,1]^n$. Hence the result.
\end{proof}

The following property asserts that the $t$-entropy attains its maxima at the uniform probability vector. This property should be satisfied by any reasonable entropy since it asserts that the uncertainty of the distribution is maximum if all the outcomes are equally likely to occur. Property \ref{p:4} proves axiom (2) of Section \ref{axiom}. 
\begin{prop}
\label{p:4}
(Maximum) For any $c>0$, the entropy $H_c(\bp)$ is maximized at the uniform probability vector $(\frac{1}{n},\dots,\frac{1}{n})$.
\end{prop}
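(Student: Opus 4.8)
The plan is to exploit two properties already established rather than computing anything from scratch: the symmetry of $H_c$ (Property \ref{p:sym}) and its concavity (Property \ref{p:3}). Together these yield the maximum at the uniform vector through a standard symmetrization-plus-Jensen argument, sidestepping any direct Lagrange-multiplier calculation.

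First I would fix an arbitrary probability vector $\bp=(p_1,\dots,p_n)$ and consider its orbit under the symmetric group on $\{1,\dots,n\}$. For each permutation $\sigma$, the reordered vector $\sigma(\bp)$ is again a probability vector, and Property \ref{p:sym} gives $H_c(\sigma(\bp))=H_c(\bp)$. Next I would form the average $\bar{\bp}=\frac{1}{n!}\sum_\sigma \sigma(\bp)$ over all $n!$ permutations. A short combinatorial count shows that each coordinate of $\bar{\bp}$ equals $\frac{1}{n}\sum_{j=1}^n p_j=\frac{1}{n}$, so $\bar{\bp}$ is precisely the uniform vector $(\frac{1}{n},\dots,\frac{1}{n})$.

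Invoking concavity (Property \ref{p:3}) and Jensen's inequality over the convex simplex then gives
\[
H_c\Big(\tfrac{1}{n},\dots,\tfrac{1}{n}\Big)=H_c(\bar{\bp})\geq \frac{1}{n!}\sum_\sigma H_c(\sigma(\bp))=H_c(\bp),
\]
which is exactly the claim since $\bp$ was arbitrary.

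The step I expect to require the most care is conceptually easy but simple to state loosely: verifying that the permutation average collapses to the uniform vector (each index lands in each position exactly $(n-1)!$ times, so the averaged coordinate is $\frac{(n-1)!\sum_j p_j}{n!}=\frac{1}{n}$), and confirming that Jensen's inequality applies because $\bar{\bp}$ is a genuine convex combination of the points $\sigma(\bp)$, all lying in the simplex on which $H_c$ is concave. An alternative route, should one prefer an interior characterization, is to set up the Lagrangian $H_c(\bp)-\lambda(\sum_i p_i-1)$, use the gradient $\frac{\partial H_c}{\partial p_i}=\tan^{-1}(\frac{1}{p_i^c})-\frac{c\,p_i^c}{1+p_i^{2c}}$ computed in Property \ref{p:3}, and note that this coincides with $f'(p_i)$ for the concave $f$ of Lemma \ref{l:2}; the stationarity conditions $f'(p_i)=\lambda$ then force all $p_i$ equal, and concavity promotes this critical point to the global maximum.
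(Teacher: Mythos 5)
Your proof is correct, but it takes a genuinely different route from the paper. The paper proves this via Lagrange multipliers: it forms the Lagrangian $H_c(\bp)-\lambda(\sum_i p_i-1)$, sets the partial derivatives $\tan^{-1}\bigl(\frac{1}{p_i^c}\bigr)-\frac{cp_i^c}{1+p_i^{2c}}=\lambda$ to zero, and then argues that the map $x\mapsto \tan^{-1}\bigl(\frac{1}{x^c}\bigr)-\frac{cx^c}{1+x^{2c}}$ is one-to-one on $[0,1]$, which forces all $p_i$ to be equal, hence equal to $\frac{1}{n}$; this is essentially the alternative you sketch at the end of your proposal. Your symmetrization-plus-Jensen argument instead needs only two ingredients already on record --- symmetry (Property \ref{p:sym}) and concavity (Property \ref{p:3}) --- and the elementary count that averaging $\sigma(\bp)$ over all $n!$ permutations yields the uniform vector. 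What your route buys: it is a global argument, so it immediately certifies a global maximum including boundary points of the simplex, whereas the paper's stationarity computation by itself only identifies an interior critical point and implicitly leans on concavity (and on the unproved injectivity claim for the derivative map) to upgrade it to a maximum. What the paper's route buys: the first-order conditions pin down the maximizer constructively and, granted the injectivity of the derivative map, show that the uniform vector is the \emph{unique} stationary point, a uniqueness statement your averaging argument does not directly provide (it would follow from yours only by invoking strict concavity). Both proofs are sound; yours is arguably the more robust, since it generalizes verbatim to any permutation-symmetric concave function on the simplex.
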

\begin{proof}
We are to maximize $H_c(\bp)$ subject to the constrain,
\begin{equation}\label{l1}
 \sum_{i=1}^n p_i=1   
\end{equation}
 and
 \begin{equation}\label{l2}
    p_i \geq 0 \text{ for all } i \in \{1,\dots,n\}. 
 \end{equation}
 It is enough to maximize $H_c(\bp)$ w.r.t. Eqn. \eqref{l1} and show that it satisfies Eqn. \eqref{l2}. The Lagrangian is given by,
 \begin{equation}\label{e3}
     L=\sum_{i=1}^n p_i \tan^{-1}\bigg(\frac{1}{p_i}\bigg) - \frac{\pi}{4}- \lambda(\sum_{i=1}^n p_i-1)
 \end{equation}
 Thus, 
 \begin{equation}\label{e4}
    \frac{\partial L}{\partial p_i}=\tan^{-1}\bigg(\frac{1}{p_i^c}\bigg)-\frac{c p_i^c}{1+p_i^{2c}}-\lambda 
 \end{equation}
 Equating the RHS of Eqn. \eqref{e4} to $0$ for all $i \in \{1,\dots,n\}$, we get,
 \begin{align*}
    & \tan^{-1}\bigg(\frac{1}{p_i^c}\bigg)-\frac{c p_i^c}{1+p_i^{2c}}=\lambda \text{ } \forall i \in \{1,\dots,n\}\\ 
    \iff & \tan^{-1}\bigg(\frac{1}{p_i^c}\bigg)-\frac{c p_i^c}{1+p_i^{2c}}=\tan^{-1}\bigg(\frac{1}{p_j^c}\bigg)-\frac{c p_i^c}{1+p_j^{2c}}\\ 
  &  \forall i,j \in \{1,\dots,n\}.
  \end{align*}
Since the function $f(x)=\tan^{-1}\bigg(\frac{1}{x^c}\bigg)-\frac{c x^c}{1+x^{2c}}$ is one-one on $[0,1]$, we have
    \begin{equation}\label{ee}
     p_i=p_j \forall i,j \in \{1,\dots,n\}     
    \end{equation}
    
 From, Eqn. \eqref{ee} and \eqref{l1}, we get, $p_i=\frac{1}{n} \forall i \in \{1,\dots,n\} $. Clearly this solution satisfies Eqn. \eqref{l2}. Hence, For any $c>0$, the entropy $H_c(\bp)$ is maximized at $\bp=(\frac{1}{n},\dots,\frac{1}{n})$.
\end{proof}

Property \ref{p:5} says that if all the outcomes are equally likely, the $t$-entropy increases with the number of outcomes. This property also should be satisfied by any reasonable entropy since the uncertainty increases with the number of outcomes if the outcomes are equally likely to occur.

\begin{prop}
\label{p:5}
If $p_1=p_2=\dots=p_n=\frac{1}{n}$, then $H_c(p)$ is an increasing function of $n$.
\end{prop}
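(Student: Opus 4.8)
The plan is to reduce the multivariate quantity $H_c$ at the uniform vector to a single real-valued function of $n$ and then argue monotonicity by composition of increasing maps. First I would substitute $p_i = \tfrac{1}{n}$ for every $i$ directly into Definition \ref{d:1}. Since all $n$ summands are identical, the sum collapses:
\begin{equation*}
H_c\Big(\tfrac{1}{n},\dots,\tfrac{1}{n}\Big)
= \sum_{i=1}^n \frac{1}{n}\,\tan^{-1}\!\Big(\frac{1}{(1/n)^c}\Big) - \frac{\pi}{4}
= n\cdot\frac{1}{n}\,\tan^{-1}(n^c) - \frac{\pi}{4}
= \tan^{-1}(n^c) - \frac{\pi}{4}.
\end{equation*}
So the entire claim reduces to showing that $\varphi(n) := \tan^{-1}(n^c) - \tfrac{\pi}{4}$ is increasing in $n$.

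Next I would establish monotonicity by decomposing $\varphi$ as a composition. Since $c>0$, the map $n \mapsto n^c$ is strictly increasing for $n\ge 1$; and $\tan^{-1}(\cdot)$ is strictly increasing on all of $\mathbb{R}$ (its derivative $\tfrac{1}{1+x^2}$ is strictly positive). The constant shift $-\tfrac{\pi}{4}$ does not affect monotonicity. Hence $\varphi$, being the composition of two strictly increasing functions followed by a translation, is strictly increasing. In particular $\varphi(n+1) > \varphi(n)$ for every $n$, which is exactly the assertion of the property. If one prefers a derivative argument (treating $n$ as a continuous variable), then
\begin{equation*}
\frac{d\varphi}{dn} = \frac{c\,n^{c-1}}{1+n^{2c}} > 0 \quad\text{for all } n>0,\ c>0,
\end{equation*}
gives the same conclusion.

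I do not anticipate a genuine obstacle here: the result is a one-line computation once the uniform substitution is made, and the only thing to be careful about is keeping track of the exponent $c$ inside the $\tan^{-1}$ after the collapse, so that the argument becomes $n^c$ rather than $n$. One stylistic point worth flagging in the write-up is that $n$ ranges over positive integers, so strictly speaking we only need monotonicity on $\mathbb{N}$; but since the stronger statement that $\varphi$ is increasing as a function of a real variable holds, the integer case follows immediately.
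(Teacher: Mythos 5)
Your proof is correct and follows essentially the same route as the paper: substitute the uniform vector into Definition \ref{d:1}, collapse the sum to $\tan^{-1}(n^c) - \tfrac{\pi}{4}$, and conclude by monotonicity of $\tan^{-1}$ composed with $n \mapsto n^c$. You merely spell out the final monotonicity step (composition of increasing maps, or the derivative computation) more explicitly than the paper, which simply states it as immediate from $c>0$.
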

\begin{proof}
We consider, $\bp=(p_1,\dots,p_n)$ with, $p_1=p_2=\dots=p_n=\frac{1}{n}$ and $n \geq 1$. We thus have, 
\begin{align*}
    H_c(\bp) & =\sum_{i=1}^{n} p_i \tan^{-1}\bigg(\frac{1}{p_i^c}\bigg) - \frac{\pi}{4}\\
    & = \sum_{i=1}^{n} \frac{1}{n} \tan^{-1}(n^c) - \frac{\pi}{4}\\
    & = \tan^{-1}(n^c) - \frac{\pi}{4}
\end{align*}
Since $c>0$, $H_c(\bp)$ is an increasing function of $n$.
\end{proof}

The following property tells us that an event of probability zero does not contribute to the $t$-entropy. Property \ref{pa3} proves axiom (3) of Section \ref{axiom}.

\begin{prop}
\label{pa3}
 Event of probability zero does not contribute to the entropy, i.e. for any n, $H_{c,n+1}(p_1,\dots,p_n,0)=H_{c,n}(p_1,\dots,p_n)$.
\end{prop}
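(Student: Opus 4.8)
The plan is to read off the left-hand side directly from Definition \ref{d:1}, noting that appending a zero probability simply adds one extra summand to the sum, namely the term indexed by the new coordinate $p_{n+1}=0$. Since the constant $-\frac{\pi}{4}$ appears identically on both sides, the entire claim collapses to showing that this single extra summand vanishes.

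First I would record the behaviour of the summand at a zero argument. For $c>0$ the map $p\mapsto p\tan^{-1}(1/p^c)$ is defined on $(0,1]$ and extends continuously to $p=0$ with value $0$: because $\tan^{-1}$ is bounded above by $\frac{\pi}{2}$ on $[0,\infty)$, one has the squeeze
\begin{equation*}
0\le p\,\tan^{-1}\!\Big(\frac{1}{p^c}\Big)\le \frac{\pi}{2}\,p\xrightarrow[p\to 0^+]{}0.
\end{equation*}
Thus the natural (continuous) convention assigns the value $0$ to the summand at a zero probability, which is also consistent with the Continuity property proved just above.

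With this convention in hand, the verification is immediate and is exactly the computation I would then carry out:
\begin{align*}
H_{c,n+1}(p_1,\dots,p_n,0)
&=\sum_{i=1}^{n} p_i\tan^{-1}\!\Big(\frac{1}{p_i^{c}}\Big)
 + 0\cdot\tan^{-1}\!\Big(\frac{1}{0^{c}}\Big)-\frac{\pi}{4}\\
&=\sum_{i=1}^{n} p_i\tan^{-1}\!\Big(\frac{1}{p_i^{c}}\Big)-\frac{\pi}{4}
 = H_{c,n}(p_1,\dots,p_n).
\end{align*}

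The only point that demands any care is the interpretation of the term $0\cdot\tan^{-1}(1/0^{c})$, which is formally a $0\cdot\infty$ indeterminate form; the boundedness of $\tan^{-1}$ by $\frac{\pi}{2}$ resolves it cleanly, so I do not expect a genuine obstacle here. The statement is therefore essentially a direct consequence of the definition together with the boundedness already exploited in Property \ref{p:2}.
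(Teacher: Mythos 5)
Your proof is correct and follows essentially the same route as the paper's: expand the definition, isolate the extra summand $0\cdot\tan^{-1}(1/0^c)$, and kill it using the bound $\tan^{-1}\le\frac{\pi}{2}$ (the paper writes this as $0\times\tan^{-1}(\infty)=0\times\frac{\pi}{2}=0$). Your squeeze argument justifying the continuous-extension convention at $p=0$ is a slightly more careful rendering of the same step; note only that the term is not truly a $0\cdot\infty$ form, since $\tan^{-1}$ maps the infinite argument to the finite value $\frac{\pi}{2}$.
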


\begin{proof}
\begin{align*}
H_{c,n+1}(p_1,\dots,p_n,0)  = & \sum_{i=1}^{n+1} p_i \tan^{-1}\bigg(\frac{1}{p_i^c}\bigg) - \frac{\pi}{4}\\
 = &\sum_{i=1}^n p_i \tan^{-1}\bigg(\frac{1}{p_i^c}\bigg)+0 \times \tan^{-1}(\infty)\\
& - \frac{\pi}{4}\\
 = & \sum_{i=1}^n p_i \tan^{-1}\bigg(\frac{1}{p_i^c}\bigg)+0 \times \frac{\pi}{2} - \frac{\pi}{4}\\
 = & H_{c,n}(p_1,\dots,p_n)    
\end{align*}
\end{proof}

Let $X$ and $Y$ be two random variables (each taking only finitely many values).
Property \ref{p:cond} asserts that if in addition to the information about $X$, we also have the information about $Y$, then the uncertainty of $X$ decreases. Moreover, if $X$ and $Y$ are independent, then the knowledge about $Y$ is of no help in reducing the uncertainty about $X$.  Before we proceed, let us consider the following lemma.  

\begin{lem}\label{lem3}
Convex combination of finite number of concave functions is a concave function \citep{boyd2004convex}.
\end{lem}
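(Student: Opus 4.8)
The plan is to prove the statement directly from the definition of concavity, since no deeper machinery is needed. Fix a common convex domain $D$, let $f_1,\dots,f_m$ be concave functions on $D$, and let $\lambda_1,\dots,\lambda_m \geq 0$ satisfy $\sum_{i=1}^m \lambda_i = 1$. Set $g = \sum_{i=1}^m \lambda_i f_i$. The goal is to verify the concavity inequality for $g$ at an arbitrary pair of points and convex weight.

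First I would fix $x,y \in D$ and $\theta \in [0,1]$, noting that convexity of $D$ guarantees $\theta x + (1-\theta)y \in D$ so that every term below is well defined. Then I would invoke the concavity of each $f_i$, which gives, for every $i$,
$$f_i\big(\theta x + (1-\theta)y\big) \geq \theta f_i(x) + (1-\theta) f_i(y).$$
The decisive step is to multiply the $i$-th inequality by the weight $\lambda_i$ and sum over $i$. Because each $\lambda_i \geq 0$, multiplication preserves the direction of the inequality, and after summing the left-hand side collapses to $g(\theta x + (1-\theta)y)$ while the right-hand side collapses to $\theta g(x) + (1-\theta) g(y)$, yielding
$$g\big(\theta x + (1-\theta)y\big) \geq \theta\, g(x) + (1-\theta)\, g(y),$$
which is exactly the concavity of $g$.

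The only genuine subtlety — and hence the one point I would flag as the crux — is the \emph{nonnegativity} of the coefficients $\lambda_i$: this is precisely what a convex combination supplies and what allows the weighted inequalities to be combined without reversing any of them. Were a coefficient negative, the summed inequality could fail, so the hypothesis is used in an essential way here. Everything else is a finite sum of valid inequalities and requires no estimation, so I expect the write-up to be short; the main care is simply in stating the convexity of the domain and the sign of the weights explicitly.
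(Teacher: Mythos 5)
Your proof is correct and follows essentially the same route as the paper's: a direct verification of the concavity inequality by applying concavity of each $f_i$, multiplying by the nonnegative weights, and summing. Your explicit remarks on the convexity of the domain and the essential role of nonnegative coefficients are sound refinements of exactly the same argument.
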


\begin{prop}
\label{p:cond}
 $H_c(X|Y)\leq H_c(X)$. Moreover, equality holds if $X$ and $Y$ are independent.
\end{prop}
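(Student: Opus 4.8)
The plan is to express the conditional entropy as a weighted average of the pointwise conditional entropies $H_c(X\mid Y=y)$ from Definition \ref{cor1}, and then exploit concavity through Jensen's inequality. First I would use $p(x,y)=p(y)\,p(x\mid y)$ and regroup the double sum in the definition of $H_c(X\mid Y)$ over $y$ first:
\[
H_c(X\mid Y)=\sum_{y\in\mathcal Y}p(y)\sum_{x\in\mathcal X}p(x\mid y)\tan^{-1}\!\Big(\tfrac{1}{p(x\mid y)^c}\Big)-\frac{\pi}{4}=\sum_{y\in\mathcal Y}p(y)\,H_c(X\mid Y=y),
\]
where the constant $\tfrac{\pi}{4}$ is absorbed because $\sum_y p(y)=1$. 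Thus $H_c(X\mid Y)$ is literally the $p(y)$-weighted average of the conditional entropies $H_c(\bp_{X\mid Y=y})$, where $\bp_{X\mid Y=y}$ denotes the conditional distribution $p(\cdot\mid y)$.

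The second, and central, observation is that the marginal distribution $\bp_X=(p(x))_{x\in\mathcal X}$ is exactly the same convex combination of these conditional distributions, since $p(x)=\sum_y p(x,y)=\sum_y p(y)\,p(x\mid y)$, i.e. $\bp_X=\sum_y p(y)\,\bp_{X\mid Y=y}$. Because each weight $p(y)\ge 0$ and $\sum_y p(y)=1$, and because $H_c$ is concave on the simplex (Property \ref{p:3}, which itself rests on the concavity of $t\mapsto t\tan^{-1}(t^{-c})$ from Lemma \ref{l:2}), Jensen's inequality for concave functions gives
\[
H_c(X)=H_c\!\Big(\sum_{y}p(y)\,\bp_{X\mid Y=y}\Big)\ge\sum_{y}p(y)\,H_c(\bp_{X\mid Y=y})=H_c(X\mid Y).
\]
The finite form of Jensen used here is justified by induction on $|\mathcal Y|$ using concavity, and Lemma \ref{lem3} can be invoked to keep the convex-combination bookkeeping clean. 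Alternatively, I could bypass the functional Jensen entirely and apply the scalar Jensen inequality to $g(t)=t\tan^{-1}(t^{-c})$ for each fixed $x$, summing $\sum_y p(y)\,g(p(x\mid y))\le g\big(\sum_y p(y)p(x\mid y)\big)=g(p(x))$ over $x$; both routes yield the same bound.

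For the equality clause, under independence $p(x\mid y)=p(x)$ for every $x$ and every $y$ with $p(y)>0$, so each conditional distribution $\bp_{X\mid Y=y}$ coincides with the marginal $\bp_X$; the convex combination is then degenerate and Jensen holds with equality, giving $H_c(X\mid Y)=H_c(X)$.

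The hard part will be purely a matter of care at the boundary of the simplex rather than any deep difficulty: Lemma \ref{l:2} supplies concavity only on $[0,1]$, so I must confirm that every argument $p(x\mid y)$ lies in $[0,1]$ (which it does, being a probability), and I must check that the terms with $p(y)=0$ — where $p(x\mid y)$ is undefined and $\tan^{-1}(\cdot)$ can approach $\tfrac{\pi}{2}$ — contribute nothing, since they enter only through the prefactor $p(x,y)=0$. Handling these vanishing boundary terms, exactly as in the proof of Property \ref{pa3}, is the one place where a careless argument could slip.
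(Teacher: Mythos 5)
Your proposal is correct and takes essentially the same route as the paper: decompose $p(x,y)=p(y)\,p(x\mid y)$, regroup, and apply Jensen's inequality to the concave function $t\mapsto t\tan^{-1}(t^{-c})$ of Lemma \ref{l:2} — your ``alternative'' scalar-Jensen version is literally the paper's proof, and your primary version (functional Jensen via the concavity of $H_c$ on the simplex, Property \ref{p:3}) is the same argument repackaged, since that concavity is itself established coordinate-wise from Lemma \ref{l:2}. Your explicit handling of the $p(y)=0$ boundary terms is a point of care the paper glosses over, but it does not change the substance.
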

\begin{proof}
\begin{align*}
     H_c(X|Y) & = \sum_{x\in\mathcal{X}}\sum_{y\in\mathcal{Y}} p(x,y)\tan^{-1}\bigg(\frac{1}{p(x|y)^c}\bigg) -\frac{\pi}{4} \\
    & = \sum_{x\in\mathcal{X}}\sum_{y\in\mathcal{Y}} p(y)\bigg[p(x|y)\tan^{-1}\bigg(\frac{1}{p(x|y)^c}\bigg)\bigg] -\frac{\pi}{4}\\
    & \leq \sum_{x\in\mathcal{X}}\bigg[\big(\sum_{y\in\mathcal{Y}} p(y) p(x|y)\big)\tan^{-1}\bigg(\frac{1}{\big(\sum_{y\in\mathcal{Y}} p(y) p(x|y)\big)^c}\bigg)\bigg] \\
    & \text{  }-\frac{\pi}{4}\\
    & = \sum_x p(x)\tan^{-1}\bigg(\frac{1}{p(x)^c}\bigg) -\frac{\pi}{4}\\
    & = H_c(X).
\end{align*}
The inequality follows from applying Jensen's inequality \citep{jensen1906fonctions} on the concave function  $x\tan^{-1}\big(\frac{1}{x}\big)$ ( See Lemma \ref{l:2}).\par
Note that the function $x\tan^{-1}\big(\frac{1}{x}\big)$ is strictly concave on $[0,1]$. Thus the equality in Jensen's inequality holds  $ \implies p(x|y)=p(x|y')$ for all $y,y' \in \mathcal{Y}$ $\iff$ $P(X|Y)=P(X)$ $\iff$ $X$ and $Y$ are independent.
\begin{align*}
 H_c(X|Y) & = \sum_{x\in\mathcal{X}}\sum_{y\in\mathcal{Y}} p(x,y)\tan^{-1}\bigg(\frac{1}{p(x|y)^c}\bigg) \\
 & = \sum_x\sum_y p(x)p(y|x)\tan^{-1}\bigg(\frac{1}{p(x|y)^c}\bigg)\\
 & \leq \sum_x p(x)\tan^{-1}\Bigg(\bigg(\sum_y\frac{p(x,y) \times p(y)}{p(x)\times p(x,y)}\bigg)^c\Bigg) \\
 & = \sum_x p(x)\tan^{-1}\Bigg(\bigg(\sum_y\frac{p(y)}{p(x)}\bigg)^c\Bigg) \\
 & = \sum_x p(x)\tan^{-1}\bigg(\frac{1}{p(x)^c}\bigg) = H_c(X).   
\end{align*}
Now, suppose that $X$ and $Y$ are independent. Then $p(x,y)=p(x)p(y) \forall x\in \mathcal{X}, y\in \mathcal{Y}$ and $p(x|y)=p(x)\forall x\in \mathcal{X}$.
\begin{align*}
    H_c(X|Y) & = \sum_{x\in\mathcal{X}}\sum_{y\in\mathcal{Y}} p(x,y)\tan^{-1}\bigg(\frac{1}{p(x|y)^c}\bigg)\\
    & = \sum_{x\in\mathcal{X}}\sum_{y\in\mathcal{Y}} p(x)p(y)\tan^{-1}\bigg(\frac{1}{p(x)^c}\bigg)\\
    & = \sum_{x\in\mathcal{X}}p(x)\tan^{-1}\bigg(\frac{1}{p(x)^c}\bigg)\sum_{y\in\mathcal{Y}}p(y)\\
    & = \sum_{x\in\mathcal{X}}p(x)\tan^{-1}\bigg(\frac{1}{p(x)^c}\bigg)\\
    & = H_c(X).
\end{align*}
\end{proof}

\begin{rem}
If the distribution of $X$ is replaced by the distribution of $X|Z$, then this property indicates the strong subadditivity property of the entropy. In this case, the equality holds $\iff$ $(Y,Z,X)$ is Markovian \citep{accardi1975noncommutative}.
\end{rem}

We now consider Property \ref{p:max}, which states that the joint entropy of is always greater than or equal to the marginal entropies. 
\begin{prop}
\label{p:max}
$H_c(X,Y)\geq \max\{H_c(X),H_c(Y)\}$
\end{prop}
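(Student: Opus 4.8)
The plan is to prove the two inequalities $H_c(X,Y) \ge H_c(X)$ and $H_c(X,Y) \ge H_c(Y)$ separately; since the joint entropy of Definition \ref{d:3} is symmetric in its two arguments, it suffices to establish the first and then interchange the roles of $X$ and $Y$ to obtain the second. Taking the maximum of the two bounds then yields the claim.

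For the inequality $H_c(X,Y) \ge H_c(X)$, the key observation is that the marginal probability dominates every joint probability entering it, namely $p(x,y) \le p(x) = \sum_{y'\in\mathcal{Y}} p(x,y')$ for all $y$. First I would record the monotonicity fact that the map $t \mapsto \tan^{-1}(1/t^c)$ is non-increasing on $(0,\infty)$ for every $c>0$; this is immediate from differentiation, since $\frac{d}{dt}\tan^{-1}(t^{-c}) = -\frac{c\,t^{-c-1}}{1+t^{-2c}} \le 0$. Because $p(x,y) \le p(x)$, this monotonicity gives $\tan^{-1}(1/p(x,y)^c) \ge \tan^{-1}(1/p(x)^c)$ for each pair $(x,y)$ with $p(x,y)>0$.

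Next I would carry out the term-by-term comparison. Fixing $x$ and summing over $y$,
\begin{align*}
\sum_{y\in\mathcal{Y}} p(x,y)\tan^{-1}\!\bigg(\frac{1}{p(x,y)^c}\bigg) &\ge \sum_{y\in\mathcal{Y}} p(x,y)\tan^{-1}\!\bigg(\frac{1}{p(x)^c}\bigg) \\
&= \tan^{-1}\!\bigg(\frac{1}{p(x)^c}\bigg)\sum_{y\in\mathcal{Y}} p(x,y) = p(x)\tan^{-1}\!\bigg(\frac{1}{p(x)^c}\bigg).
\end{align*}
Summing this over $x$ and subtracting $\frac{\pi}{4}$ from both sides turns the left-hand side into $H_c(X,Y)$ and the right-hand side into $H_c(X)$, which is exactly the desired inequality.

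The argument is almost entirely mechanical, so the only real point to watch is the handling of vanishing probabilities: a pair with $p(x,y)=0$ contributes the indeterminate-looking $0\cdot\tan^{-1}(\infty)$, but following the convention of Property \ref{pa3} this is read as $0\cdot\frac{\pi}{2}=0$, so such terms may simply be dropped from all sums and the monotonicity comparison need only run over pairs with $p(x,y)>0$. I expect this bookkeeping around the support, rather than any analytic difficulty, to be the main thing requiring care.
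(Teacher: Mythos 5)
Your proof is correct and takes essentially the same route as the paper's: both rest on the domination $p(x,y)\le p(x)$, the monotonicity of $t\mapsto\tan^{-1}(1/t^c)$, multiplication by $p(x,y)$ followed by summation, and symmetry in $X$ and $Y$ to get the second inequality. Your explicit handling of zero-probability pairs via the convention $0\cdot\tan^{-1}(\infty)=0$ is a small point of care that the paper's proof leaves implicit.
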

\begin{proof}
We know that $\forall x,y$, 
\begin{align*}
         & p(x,y) \leq p(x)\\
\implies & \frac{1}{p(x,y)^c} \geq \frac{1}{p(x)^c}\\ 
\implies & \tan^{-1}\bigg(\frac{1}{p(x,y)^c}\bigg) \geq \tan^{-1}\bigg(\frac{1}{p(x)^c}\bigg).
\end{align*}
So, multiplying both sides by $p(x,y)$ and summing over all $x$ and $y$, we get,
\begin{align*}
    & \sum_x\sum_y p(x,y) \tan^{-1}\bigg(\frac{1}{p(x,y)^c}\bigg) \\
    & \geq \sum_x\sum_y p(x,y)\tan^{-1}\bigg(\frac{1}{p(x)^c}\bigg)\\
\implies & \sum_x\sum_y p(x,y) \tan^{-1}\bigg(\frac{1}{p(x,y)^c}\bigg)-\frac{\pi}{4} \\
&\geq \sum_x \tan^{-1}\bigg(\frac{1}{p(x)^c}\bigg)\sum_yp(x,y)-\frac{\pi}{4}\\
\implies & H_c(X,Y)\geq H_c(X).
\end{align*}
Similarly, we can show that $H_c(X,Y)\geq H_c(Y)$ . So, combining the two inequalities, we get, $$H_c(X,Y)\geq \max\{H_c(X),H_c(Y)\}.$$
\end{proof}

\begin{cor}
$$H_c(X_1,\dots,X_n) \geq \max\{H_c(X_1),\dots, H_c(X_n)\}.$$
\end{cor}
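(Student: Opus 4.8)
The plan is to prove this corollary by a straightforward induction on $n$, leveraging the two-variable result already established in Property \ref{p:max}. Alternatively, one can mirror the direct argument of Property \ref{p:max} with $n$ coordinates; I would sketch both but treat the inductive route as the cleaner option.

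For the inductive approach, the base case $n=1$ is trivial, since the inequality then reads $H_c(X_1) \geq H_c(X_1)$. For the inductive step, I would bundle the tuple $(X_1,\dots,X_{n-1})$ into a single composite random variable $W$ taking values in the finite product set $\mathcal{X}_1 \times \cdots \times \mathcal{X}_{n-1}$. Because $H_c$ depends only on the underlying distribution and is invariant under relabeling of outcomes (Property \ref{p:sym}), the joint entropy $H_c(X_1,\dots,X_n)$ coincides with $H_c(W,X_n)$. Applying Property \ref{p:max} to the pair $(W,X_n)$ then gives
$$H_c(X_1,\dots,X_n) = H_c(W,X_n) \geq \max\{H_c(W),H_c(X_n)\}.$$
By the induction hypothesis, $H_c(W)=H_c(X_1,\dots,X_{n-1}) \geq \max\{H_c(X_1),\dots,H_c(X_{n-1})\}$, and combining this with the display above yields the claim for $n$.

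If one instead prefers the direct route, the key observation is the marginalization inequality $p(x_1,\dots,x_n) \leq p(x_i)$ for each fixed $i$, which holds because the marginal $p(x_i)$ is recovered by summing the nonnegative joint probabilities over all remaining coordinates. Since $\tan^{-1}(1/t^c)$ is decreasing in $t$, this gives $\tan^{-1}\big(1/p(x_1,\dots,x_n)^c\big) \geq \tan^{-1}\big(1/p(x_i)^c\big)$; multiplying through by $p(x_1,\dots,x_n)$, summing over all coordinates, and collapsing the sum over the coordinates other than $x_i$ to recover the marginal $p(x_i)$ reproduces exactly the chain of inequalities in the proof of Property \ref{p:max}, now establishing $H_c(X_1,\dots,X_n) \geq H_c(X_i)$ for every $i$.

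I expect the only genuine subtlety, rather than a true obstacle, to be the bookkeeping in the inductive step: one must confirm that grouping several coordinates into one composite random variable leaves the joint entropy unchanged, which follows because $H_c$ is a function of the probability distribution alone (Property \ref{p:sym}). Once this reduction to the two-variable statement is in place, the remainder is routine, and taking the maximum over all $i$ completes the argument.
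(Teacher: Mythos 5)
Your proposal is correct and takes essentially the same route as the paper, whose entire proof is that the corollary ``easily follows from Property \ref{p:max} by using induction''; your bundling of $(X_1,\dots,X_{n-1})$ into a composite variable $W$ is exactly the bookkeeping the paper leaves implicit. The direct marginalization argument you sketch as an alternative is also valid, but it is not needed once the inductive reduction to the two-variable case is in place.
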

\begin{proof}
The Corollary easily follows from Property \ref{p:max} by using induction.
\end{proof}

The subadditivity property (Property \ref{p:cond sum}) proves axiom (4) of Section \ref{axiom}.
\begin{prop}
\label{p:cond sum}
(Subadditivity) $H_c(X,Y)\leq H_c(X)+H_c(Y|X)$.
\end{prop}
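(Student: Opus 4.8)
The plan is to cancel the additive constants, rewrite the whole difference as a single expectation over the joint distribution by means of $p(x,y)=p(x)\,p(y|x)$, and then reduce the statement to one elementary inequality for the inverse-tangent. Writing out the three entropies from Definitions \ref{d:2}, \ref{d:3} and \ref{d:4} and using $\sum_{y}p(y|x)=1$ (so that $\sum_x p(x) f(x)=\sum_{x,y}p(x,y)f(x)$), I would first establish
\begin{align*}
&H_c(X)+H_c(Y|X)-H_c(X,Y)\\
&\quad=\sum_{x,y} p(x,y)\bigg[\tan^{-1}\!\Big(\tfrac{1}{p(x)^c}\Big)+\tan^{-1}\!\Big(\tfrac{1}{p(y|x)^c}\Big)-\tan^{-1}\!\Big(\tfrac{1}{p(x,y)^c}\Big)-\tfrac{\pi}{4}\bigg],
\end{align*}
where the single leftover $-\tfrac{\pi}{4}$ has been absorbed into the sum using $\sum_{x,y}p(x,y)=1$. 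The key observation is that $p(x,y)=p(x)\,p(y|x)$ forces $\tfrac{1}{p(x,y)^c}=\tfrac{1}{p(x)^c}\cdot\tfrac{1}{p(y|x)^c}$, so with the substitutions $a=\tfrac{1}{p(x)^c}$ and $b=\tfrac{1}{p(y|x)^c}$ each summand equals $\tan^{-1}(a)+\tan^{-1}(b)-\tan^{-1}(ab)-\tfrac{\pi}{4}$. Since $p(x),p(y|x)\le 1$, both $a\ge 1$ and $b\ge 1$.

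Everything thus reduces to the pointwise claim: for all $a,b\ge 1$,
$$\tan^{-1}(ab)\le \tan^{-1}(a)+\tan^{-1}(b)-\frac{\pi}{4}.$$
I expect this elementary inequality to be the main obstacle, and I would prove it by fixing $a\ge 1$ and studying $g(b)=\tan^{-1}(a)+\tan^{-1}(b)-\tfrac{\pi}{4}-\tan^{-1}(ab)$ for $b\ge 1$. At the left endpoint $g(1)=\tan^{-1}(a)+\tfrac{\pi}{4}-\tfrac{\pi}{4}-\tan^{-1}(a)=0$, so it suffices to show $g$ is nondecreasing. Differentiating,
$$g'(b)=\frac{1}{1+b^2}-\frac{a}{1+a^2b^2},$$
and clearing the positive denominators, $g'(b)\ge 0$ is equivalent to $1+a^2b^2\ge a+ab^2$, i.e. to the factored form $(a-1)(ab^2-1)\ge 0$. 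Both factors are nonnegative when $a\ge 1$ and $b\ge 1$, so $g'(b)\ge 0$ and hence $g(b)\ge g(1)=0$ throughout.

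Finally, substituting the pointwise inequality back, every summand in the displayed expression for $H_c(X)+H_c(Y|X)-H_c(X,Y)$ is nonnegative; since the weights $p(x,y)$ are nonnegative the whole sum is $\ge 0$, giving $H_c(X,Y)\le H_c(X)+H_c(Y|X)$, which is precisely axiom (4). As a consistency check, equality holds exactly when $a=1$ or $b=1$ at each atom, i.e. when $p(x)=1$ or $p(y|x)=1$, matching the expected degenerate cases.
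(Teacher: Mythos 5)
Your proof is correct, and it takes a genuinely different route through the key step than the paper does. Both arguments share the same skeleton---absorb the constants, pass to a single sum over the joint distribution, and reduce the claim to a pointwise inequality at each atom $(x,y)$ using $p(x,y)=p(x)p(y|x)$---but the pointwise inequality is handled differently. The paper groups the $\tfrac{\pi}{4}$ as $\tan^{-1}(1)$ and invokes the arctangent addition formula $\tan^{-1}(u)+\tan^{-1}(v)=\tan^{-1}\bigl(\tfrac{u+v}{1-uv}\bigr)$ on both sides, then compares the resulting arguments (the common denominator $1-\tfrac{1}{p(x,y)^c}$ is negative, which flips the inequality), arriving at the algebraic statement $(1-p(x)^c)(1-p(y|x)^c)\geq 0$. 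You instead isolate the clean inequality $\tan^{-1}(ab)\leq \tan^{-1}(a)+\tan^{-1}(b)-\tfrac{\pi}{4}$ for $a,b\geq 1$ and prove it by checking the endpoint $b=1$ and showing $g'(b)\geq 0$ via the factorization $(a-1)(ab^2-1)\geq 0$. Your route has a concrete advantage in rigor: the addition formula used by the paper is only valid up to a $+\pi$ correction precisely when $uv>1$, which is exactly the regime in play here (both products equal $\tfrac{1}{p(x,y)^c}\geq 1$); the paper's proof is salvageable because the correction is the same on both sides, but it never says so, whereas your calculus argument never touches a branch cut. You also get the equality characterization ($p(x)=1$ or $p(y|x)=1$ at each atom of the support) essentially for free, which the paper does not state. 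What the paper's approach buys in exchange is that it is purely algebraic once the formula is granted, requiring no differentiation.
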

\begin{proof}
We have to prove,\\
\begingroup
\allowdisplaybreaks
\begin{align*}
 & H_c(X,Y) \leq H_c(X)+H_c(Y|X)\\
\iff & \sum_x\sum_y p(x,y)\tan^{-1}\bigg(\frac{1}{p(x,y)^c}\bigg)-\frac{\pi}{4} \leq\sum_x p(x)\tan^{-1}\bigg(\frac{1}{p(x)^c}\bigg) +\sum_x\sum_y p(x,y) \tan^{-1}\bigg(\frac{1}{p(y|x)^c}\bigg)-\frac{\pi}{2}\\
\iff & \sum_x\sum_y p(x,y) \tan^{-1}\bigg(\frac{1}{p(x,y)^c}\bigg)+\frac{\pi}{4} \leq \sum_x\sum_y p(x,y) \tan^{-1}\bigg(\frac{1}{p(x)^c}\bigg) +\sum_x\sum_y p(x,y) \tan^{-1}\bigg(\frac{1}{p(y|x)^c}\bigg)\\
\iff & \sum_x\sum_y p(x,y) \Bigg[\tan^{-1}\bigg(\frac{1}{p(x,y)^c}\bigg)+ \tan^{-1}(1)\Bigg] \leq \sum_x\sum_y p(x,y)\bigg(\tan^{-1}\bigg(\frac{1}{p(x)^c}\bigg)
 +\tan^{-1}\bigg(\frac{1}{p(y|x)^c}\bigg)\bigg)\\
\iff & \sum_x\sum_y p(x,y) \tan^{-1}\bigg(\frac{\frac{1}{p(x,y)^c}+1}{1-\frac{1}{p(x,y)^c}}\bigg) \leq \sum_x\sum_y p(x,y) \tan^{-1}\bigg(\frac{\frac{1}{p(x)^c}+\frac{1}{p(y|x)^c}}{1-\frac{1}{p(x,y)^c}}\bigg).
\end{align*}
 \endgroup
This holds if $\forall x,y$,
\begin{align*}
     & \frac{\frac{1}{p(x,y)^c}+1}{1-\frac{1}{p(x,y)^c}} \leq \frac{\frac{1}{p(x)^c}+\frac{1}{p(y|x)^c}}{1-\frac{1}{p(x,y)^c}}\\
\iff & \frac{1}{p(x,y)^c}+1 \geq \frac{1}{p(x)^c}+\frac{1}{p(y|x)^c}\\
\iff & 1+p(x,y)^c \geq \frac{p(x,y)^c}{p(x)^c}+\frac{p(x,y)^c}{p(y|x)^c}\\
\iff & 1+p(x,y)^c \geq p(y|x)^c+p(x)^c\\
\iff & 1+p(y|x)^c\times p(x)^c-p(y|x)^c-p(x)^c \geq 0\\
\iff & (1-p(x)^c)(1-p(y|x)^c)\geq 0.
\end{align*}
The last statement holds trivially since $0\leq p(x)^c,p(y|x)^c\leq 1, \forall c>0$.
\end{proof}

\begin{cor}
$H_c(X_1,\dots, X_n) \leq H_c(X_1)+H_c(X_2|X_1)+H_c(X_3|X_2,X_1)+\dots+H_c(X_n|X_1,\dots,X_{n-1})$.
\end{cor}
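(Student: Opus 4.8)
The plan is to prove the chain-rule inequality by induction on $n$, using the pairwise subadditivity of Property \ref{p:cond sum} both as the base case and as the engine of the inductive step. The base case $n=2$ is exactly the statement $H_c(X_1,X_2)\le H_c(X_1)+H_c(X_2|X_1)$, which is Property \ref{p:cond sum} verbatim.

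For the inductive step, I would assume the inequality holds for any collection of $n-1$ finite-valued random variables. The key device is to collapse the first $n-1$ variables into a single composite random variable $W=(X_1,\dots,X_{n-1})$, which takes values in the finite product set $\mathcal{X}_1\times\cdots\times\mathcal{X}_{n-1}$. Applying Property \ref{p:cond sum} to the pair $(W,X_n)$ gives $H_c(W,X_n)\le H_c(W)+H_c(X_n|W)$. I would then observe that the joint distribution of $(W,X_n)$ coincides with that of $(X_1,\dots,X_n)$ up to a relabeling of outcomes, so by the symmetry property (Property \ref{p:sym}) we have $H_c(W,X_n)=H_c(X_1,\dots,X_n)$; likewise $H_c(W)=H_c(X_1,\dots,X_{n-1})$, and, directly from Definition \ref{d:4} with $Y$ instantiated as the tuple $W$, $H_c(X_n|W)=H_c(X_n|X_1,\dots,X_{n-1})$. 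This reduces the desired bound to the single inequality $H_c(X_1,\dots,X_n)\le H_c(X_1,\dots,X_{n-1})+H_c(X_n|X_1,\dots,X_{n-1})$.

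Finally, I would invoke the induction hypothesis to bound $H_c(X_1,\dots,X_{n-1})$ from above by $H_c(X_1)+H_c(X_2|X_1)+\cdots+H_c(X_{n-1}|X_1,\dots,X_{n-2})$, and then add the term $H_c(X_n|X_1,\dots,X_{n-1})$ to both sides. Chaining this with the reduced inequality from the previous paragraph yields the claimed telescoping bound, completing the induction.

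The only point requiring care — and the nearest thing to an obstacle — is verifying that Property \ref{p:cond sum} applies without modification when one of its arguments is the composite variable $W$ rather than an atomic one. Inspecting the proof of that property, the decisive step reduces to the pointwise inequality $(1-p(x)^c)(1-p(y|x)^c)\ge 0$, which holds for arbitrary probability values regardless of whether the index $x$ labels a single outcome or a tuple of outcomes. Hence the pairwise subadditivity is fully general in its two arguments, and the induction goes through with no additional hypotheses.
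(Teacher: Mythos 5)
Your proposal is correct and follows exactly the route the paper intends: the paper's own proof simply states that the corollary ``follows from Property \ref{p:cond sum} by using induction,'' and your argument is precisely that induction, with the composite variable $W=(X_1,\dots,X_{n-1})$ supplying the inductive step. Your extra care in verifying that Property \ref{p:cond sum} applies when one argument is a tuple-valued variable is a worthwhile detail the paper leaves implicit.
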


\begin{proof}
The Corollary easily follows from Property \ref{p:cond sum} by using induction.
\end{proof}
\begin{prop}
\label{p:sum}
$H_c(X,Y) \leq H_c(X)+H_c(Y)$.
\end{prop}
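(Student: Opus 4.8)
The plan is to derive this inequality by chaining two results already established, rather than re-deriving anything from scratch. First I would invoke the subadditivity property (Property \ref{p:cond sum}), which gives $H_c(X,Y) \leq H_c(X) + H_c(Y|X)$. This immediately reduces the task to bounding the conditional entropy $H_c(Y|X)$ by the marginal entropy $H_c(Y)$.

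Second, I would observe that Property \ref{p:cond} already supplies exactly this bound, up to relabeling. That property states $H_c(X|Y) \leq H_c(X)$, and its proof rests only on Jensen's inequality applied to the concave map $x \mapsto x\tan^{-1}(1/x)$ (Lemma \ref{l:2}), which is symmetric in the two variables. Interchanging the roles of $X$ and $Y$ therefore yields $H_c(Y|X) \leq H_c(Y)$ with no additional computation. Combining the two inequalities gives $H_c(X,Y) \leq H_c(X) + H_c(Y|X) \leq H_c(X) + H_c(Y)$, which is the claim.

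The only point requiring care, and the nearest thing to an obstacle here, is justifying that Property \ref{p:cond} may be applied with $X$ and $Y$ interchanged, since that property is stated for one specific ordering. I would handle this by noting explicitly that its derivation uses only the concavity of $x\tan^{-1}(1/x^c)$ together with the marginalization identity $\sum_x p(x,y) = p(y)$, both of which are symmetric under swapping the variables; alternatively, one could simply re-run the Jensen argument verbatim with the conditioning variable taken to be $X$. Either way, no new estimates are needed beyond the concavity established in Lemma \ref{l:2}, so I expect the entire argument to be short.
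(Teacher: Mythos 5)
Your proposal is correct and follows essentially the same route as the paper: chaining subadditivity (Property \ref{p:cond sum}) with Property \ref{p:cond} applied with the roles of $X$ and $Y$ interchanged. Your explicit justification that the swap is legitimate (since the Jensen argument is symmetric in the two variables) is a point the paper glosses over, but it does not change the substance of the argument.
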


\begin{proof}
This proof follows easily from Properties \ref{p:cond} and \ref{p:cond sum}. From Property \ref{p:cond sum}, we have, $H_c(X,Y) \leq H_c(X) + H_c(Y|X) \leq H_c(X) + H_c(Y)$. The last inequality follows from Property \ref{p:cond}.
\end{proof}

\begin{cor}
$H_c(X_1,\dots, X_n) \leq \sum_{i=1}^n H_c(X_i)$.
\end{cor}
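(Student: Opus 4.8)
The plan is to proceed by induction on $n$, exactly as in the preceding corollaries, using Property \ref{p:sum} as the two-variable base case. For $n=1$ the claim $H_c(X_1) \le H_c(X_1)$ is trivial, and for $n=2$ it is precisely Property \ref{p:sum}. So I would take the induction hypothesis to be that the inequality holds for some $k \ge 2$, namely $H_c(X_1,\dots,X_k) \le \sum_{i=1}^k H_c(X_i)$, and aim to establish it for $k+1$.

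The key device is to group the first $k$ variables into a single random variable. Since each $X_i$ takes values in a finite set, the tuple $W := (X_1,\dots,X_k)$ is itself a finite-valued random variable, taking values in the finite product set $\mathcal{X}_1 \times \cdots \times \mathcal{X}_k$, and its distribution is the joint distribution of $X_1,\dots,X_k$. Consequently $H_c(W) = H_c(X_1,\dots,X_k)$ and, pairing $W$ with $X_{k+1}$, the joint entropy $H_c(W, X_{k+1})$ is by definition (Definition \ref{d:3}) computed from the joint distribution of all $k+1$ variables, so $H_c(W, X_{k+1}) = H_c(X_1,\dots,X_{k+1})$. This identification is the only point that requires a moment's care, and it is what lets the two-variable result be reused verbatim.

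Applying Property \ref{p:sum} to the pair $(W, X_{k+1})$ then yields
\begin{align*}
H_c(X_1,\dots,X_{k+1}) &= H_c(W, X_{k+1}) \\
&\le H_c(W) + H_c(X_{k+1}) \\
&= H_c(X_1,\dots,X_k) + H_c(X_{k+1}),
\end{align*}
and invoking the induction hypothesis on the first term gives $H_c(X_1,\dots,X_{k+1}) \le \sum_{i=1}^{k} H_c(X_i) + H_c(X_{k+1}) = \sum_{i=1}^{k+1} H_c(X_i)$, which closes the induction.

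There is no genuine analytic obstacle here, as all the real work was already done in Property \ref{p:sum} (itself a consequence of Properties \ref{p:cond} and \ref{p:cond sum}); the statement is purely a structural consequence obtained by iterating the two-variable bound. If anything, the only thing worth flagging explicitly is the bookkeeping observation above that a tuple of finitely-valued random variables is again a finitely-valued random variable, so that Property \ref{p:sum} genuinely applies with $W=(X_1,\dots,X_k)$ in the role of a single variable.
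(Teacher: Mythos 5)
Your proof is correct and is essentially the paper's own argument: the paper disposes of this corollary with the single line ``follows from Property \ref{p:sum} by using induction,'' and your write-up is exactly that induction, with the grouping $W=(X_1,\dots,X_k)$ making the step explicit. Nothing is missing; you have simply supplied the bookkeeping the paper leaves implicit.
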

\begin{proof}
The Corollary easily follows from Property \ref{p:sum} by using induction.
\end{proof}

\begin{prop}
The conditional entropy defined in Definition \ref{d:4} satisfies the definition of conditional entropy i.e., $$H_c(X|Y)=\sum_{y\in \mathcal{Y}}p(y)H_c(X|Y=y).$$
\end{prop}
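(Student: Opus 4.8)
The plan is to start from the right-hand side and unfold it using Definition \ref{cor1}, then recombine it using only the elementary identity $p(x,y) = p(y)\,p(x|y)$; no concavity or inequality machinery is required, since the statement is a direct algebraic identity. Writing the summand $H_c(X|Y=y)$ out explicitly gives
$$\sum_{y\in\mathcal{Y}} p(y)\,H_c(X|Y=y) = \sum_{y\in\mathcal{Y}} p(y)\left[\sum_{x\in\mathcal{X}} p(x|y)\tan^{-1}\!\left(\frac{1}{p(x|y)^c}\right) - \frac{\pi}{4}\right],$$
and the first step is to distribute $p(y)$ across the bracket so that the expression separates into a double sum plus a leftover constant contribution.

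For the double-sum part I would push the factor $p(y)$ inside the inner sum and invoke the product rule $p(y)\,p(x|y) = p(x,y)$, which turns the term into $\sum_{x\in\mathcal{X}}\sum_{y\in\mathcal{Y}} p(x,y)\tan^{-1}(1/p(x|y)^c)$, precisely the leading piece of $H_c(X|Y)$ as given in Definition \ref{d:4}. For the leftover contribution, the offset $-\pi/4$ carries the weight $p(y)$, and since $\sum_{y\in\mathcal{Y}} p(y)=1$ these weighted copies collapse to a single $-\pi/4$. Combining the two contributions reproduces Definition \ref{d:4} verbatim, establishing the claim.

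The only point requiring care is the bookkeeping of the additive constant: unlike Shannon-type entropies, the $t$-entropy carries an explicit $-\pi/4$ offset in each conditional entropy $H_c(X|Y=y)$, so one must confirm that averaging this offset over $y$ against $p(y)$ returns a single $-\pi/4$ rather than $|\mathcal{Y}|$ copies of it. Once this normalization is observed, the equality follows immediately from the definition of conditional probability, and the proof reduces to matching the two sides term by term.
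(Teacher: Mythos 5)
Your proof is correct and is essentially the expansion of what the paper itself does: the paper's proof simply states that the identity ``follows trivially from Definition \ref{cor1}'', and your calculation---distributing $p(y)$, applying $p(y)p(x|y)=p(x,y)$, and collapsing the weighted $-\pi/4$ offsets via $\sum_{y}p(y)=1$---is exactly that trivial verification written out in full. Your attention to the additive constant is the right (and only) point of care, so nothing further is needed.
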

\begin{proof}
The proof follows trivially from Definition \ref{cor1}.
\end{proof}

\begin{prop}
Suppose $\bp=(p_1, \dots, p_n)$ and $\bq=(q_1,\dots,q_m)$ be two finite discrete generalized probability distribution (which are simply sequences of non-negative numbers). Let, $W(\bp)=\sum_{k=1}^n p_k$, $W(\bq)=\sum_{k=1}^m q_k$ and $\bp \cup \bq =(p_1, \dots, p_n,q_1,\dots,q_m)$. Then $H_c(\bp \cup \bq) \geq \frac{W(\bp)H_c(\bp)+W(\bq)H_c(\bq)}{W(\bp)+W(\bq)}$, provided $W(\bp)+W(\bq)\leq 1$.
\end{prop}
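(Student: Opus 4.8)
The plan is to reduce the claimed inequality to a manifestly non-negative quantity by expanding each entropy through Definition \ref{d:1} and carefully tracking the additive constant $\frac{\pi}{4}$. Write $A = \sum_{i=1}^n p_i \tan^{-1}(1/p_i^c)$ and $B = \sum_{k=1}^m q_k \tan^{-1}(1/q_k^c)$, so that $H_c(\bp) = A - \frac{\pi}{4}$, $H_c(\bq) = B - \frac{\pi}{4}$, and, since $\bp \cup \bq$ merely concatenates the two sequences, $H_c(\bp \cup \bq) = A + B - \frac{\pi}{4}$. Abbreviate $W_p = W(\bp)$ and $W_q = W(\bq)$, and note we may assume $W_p + W_q > 0$, as otherwise the right-hand side is undefined.

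First I would substitute these expressions into the target inequality. The right-hand side becomes $\frac{W_p(A - \pi/4) + W_q(B - \pi/4)}{W_p + W_q} = \frac{W_p A + W_q B}{W_p + W_q} - \frac{\pi}{4}$, so the constant $\frac{\pi}{4}$ matches the $-\frac{\pi}{4}$ on the left and cancels. Multiplying through by the positive quantity $W_p + W_q$, the inequality is equivalent to $(W_p + W_q)(A + B) \geq W_p A + W_q B$, which upon expanding collapses to the single condition $W_q A + W_p B \geq 0$.

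The final step is to establish non-negativity of $W_q A + W_p B$, and this is where the hypothesis $W_p + W_q \leq 1$ is used: it forces every coordinate to satisfy $p_i \leq 1$ and $q_k \leq 1$, so that $1/p_i^c \geq 1$ and hence $\tan^{-1}(1/p_i^c) \geq \tan^{-1}(1) = \frac{\pi}{4} > 0$ — precisely the monotonicity argument already invoked in Property \ref{p:1}. Consequently each summand of $A$ and of $B$ is non-negative, giving $A \geq 0$ and $B \geq 0$; combined with $W_p, W_q \geq 0$ (each being a sum of non-negative numbers), this yields $W_q A + W_p B \geq 0$ and completes the proof.

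I do not anticipate a genuine obstacle here: once the constant terms are correctly bookkept, the statement reduces to an elementary non-negativity claim. The only point demanding care is the exact role of the assumption $W_p + W_q \leq 1$, which is what guarantees that the arctangent terms, and therefore $A$ and $B$, are non-negative; were the coordinates allowed to exceed $1$, the sign of these terms could reverse and the argument would break down.
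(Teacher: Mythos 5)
Your proof is correct, and it is worth noting that it does not follow the paper's route. The paper first asserts the intermediate inequality $H_c(\bp \cup \bq) \geq \max\{H_c(\bp),H_c(\bq)\}$ (stated only as ``easy to note''), then writes $H_c(\bp \cup \bq)\left(W(\bp)+W(\bq)\right) = W(\bp)H_c(\bp \cup \bq)+W(\bq)H_c(\bp \cup \bq) \geq W(\bp)H_c(\bp)+W(\bq)H_c(\bq)$ and divides by $W(\bp)+W(\bq)$. Your expansion-and-cancellation argument makes explicit exactly what that ``easy to note'' step conceals: with $H_c(\bp)=A-\pi/4$, $H_c(\bq)=B-\pi/4$ and $H_c(\bp\cup\bq)=A+B-\pi/4$, the paper's max inequality is equivalent to $A\geq 0$ and $B\geq 0$, which is precisely the fact your reduction to $W(\bq)A+W(\bp)B\geq 0$ rests on. So the two proofs ultimately use the same elementary fact, but yours is more transparent about the bookkeeping of the shared constant $\pi/4$ --- genuinely the delicate point here, since concatenation does \emph{not} simply add entropies --- whereas the paper buries that bookkeeping inside an unproved intermediate claim.

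One correction to your closing remark: the hypothesis $W(\bp)+W(\bq)\leq 1$ is not what keeps $A$ and $B$ non-negative. For any $x>0$ one has $\tan^{-1}(1/x^c)>0$, so each summand $p_i\tan^{-1}(1/p_i^c)$ is non-negative regardless of whether $p_i\leq 1$; coordinates exceeding $1$ would only push the arctangent values below $\pi/4$, never below $0$, so the sign cannot reverse. Consequently $W(\bq)A+W(\bp)B\geq 0$, and hence the stated inequality, holds for arbitrary non-negative sequences with $W(\bp)+W(\bq)>0$; the hypothesis serves to keep $\bp\cup\bq$ a generalized probability distribution in R\'enyi's sense, not to rescue the inequality. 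This does not affect the validity of your proof --- invoking $\tan^{-1}(1/p_i^c)\geq \pi/4$ is a correct, if stronger than necessary, bound --- but the claim that the argument ``would break down'' without the hypothesis is inaccurate.
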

\begin{proof}
It is easy to note that, $H_c(\bp \cup \bq) \geq \max\{H_c(\bp),H_c(\bq)\}$. Thus,
\begin{align*}
     H_c(\bp \cup \bq)\left(W(\bp)+W(\bq)\right) = & W(\bp)H_c(\bp \cup \bq)+W(\bq)H_c(\bp \cup \bq)\\
    \geq & W(\bp)H_c(\bp)+W(\bq)H_c(\bq)\\
\end{align*}
Thus we have proved that, $H_c(\bp \cup \bq) \geq \frac{W(\bp)H_c(\bp)+W(\bq)H_c(\bq)}{W(\bp)+W(\bq)}$.

\end{proof}

R\'enyi had shown that the R\'enyi entropy can be derived from certain postulates (R\'enyi's postulates) described in \citep{renyi1961measures}. We see that the proposed $t$-entropy also satisfies the prominent postulates or their relaxed versions, most of which directly follows from its properties. Postulate $1$ is the same as property $3$, while postulate $2$ is the same as property $2$. Postulate $3$ corresponds to property $6$. Moreover, properties $12$ and $14$ are relaxed versions of postulates $4$ and $5$ of \citep{renyi1961measures}, respectively.

\section{Application to Image Segmentation}\label{image_segment}

Though entropy aims to quantify the information content, it has also found use as a measure of separation that sets apart the information into more than one connected regions \citep{al2007thresholding} in certain occasions. In particular the entropy-based image segmentation techniques have gained considerable interest within the image processing community \citep{mahmoudi2012survey,kittaneh2016average}. 
Image segmentation techniques are methods of partitioning an image into non-overlapping regions which are homogeneous with respect to some characteristics such as grayscale values or texture. There are three main groups w.r.t. image segmentation: entropic threshold, cross-entropic threshold, and fuzzy entropic threshold \citep{sezgin2004survey,csengur2006comparative}.
The general procedure adopted is to use Shannon's discrete entropy to a two-class problem, i.e., to distinguish between background and foreground, by constructing a discrete histogram. Each column in this discrete histogram represents the probability of obtaining a specified gray intensity. This method was generalized using the average entropy in place of Shannon's entropy by \cite{ferraro1999representation}.\par

 \cite{kapur1985new} proposed a method of segmenting a grayscale image into two or more segments by maximizing the posterior Shannon's entropy with respect to the threshold values. For $k$-segments image segmentation, $k$ ($>1$) probability distributions are derived from the original gray-level distribution of the image as, $\mathbf{q}_j : \frac{p_{t_{j-1}+1}}{P_j},\dots,\frac{p_{L-1}}{P_j}$,
where, $P_i=\sum_{j=t_{i-1}+1}^{t_i} p_j$, $i=1,\dots,k$ with $t_0=0$ and $t_{k}=L-1$. Let $H_c(\mathbf{q}_i)$ be the entropy of the $i$-th distribution $\mathbf{q}_i$, $i=1,\dots,k$. The posterior entropy is thus defined as $\phi(t_1,\dots, t_{k-1})=\sum_{i=1}^k H_c(\mathbf{q}_i).$

The threshold values $(t_1,\dots, t_{k-1})$ are obtained by maximizing  $\phi(t_1,\dots, t_{k-1})$ w.r.t. $(t_1,\dots, t_{k-1})$. This optimization can be carried out using grid search or other heuristic optimization techniques such as Differential Evolution \citep{storn1997differential}, Genetic Algorithms \citep{mitchell1998introduction} or Particle Swarm Optimization \citep{eberhart1995new}.
\par
The colored images are first transferred from RGB to YCbCr coordinate system \citep{gonzalez2002digital} and the Y component of the images are extracted. The Y-values can vary between $0$ to $255$. The segmentation is performed on this component of the image. This technique quite is standard in literature \citep{sarkar2015multilevel,sarkar2011differential}. 
In this paper, instead of using Shannon's entropy, we use the $t$ entropy with various values of $c$ in Kapur's method.
\par
In Kapur's method for image segmentation, the function $\phi(t_1,\dots,t_{k-1})$ is quite complicated being non-convex and non-differentiable and the explicit form is not apparent from the formulation. To overcome these difficulties, we use Differential Evolution (DE) algorithm \citep{sarkar2011differential}. DE is  a metaheuristic algorithm, 
which first generates population uniformly from the search space and then successively applies mutation, crossover and selection operation to find better candidate solution eventually leading to the optima of $\phi$.
\par
\begin{figure}[ht]
\centering
\subfloat[Original Image]{\includegraphics[width=0.3\textwidth]{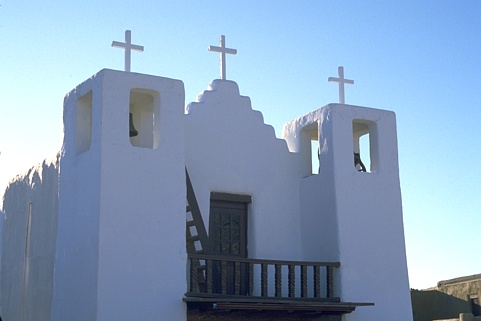}
\label{fig_first_case1}}
\hfil
\subfloat[Ground truth]{\includegraphics[width=0.3\textwidth]{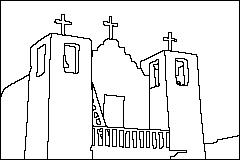}
\label{fig_second_case2}}
\hfil
\subfloat[$t$-Entropy ($c=0.1$)]{\includegraphics[width=0.3\textwidth]{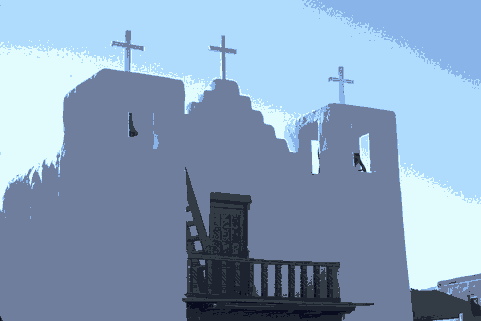}}
\label{fig_second_case3}
\hfil
\subfloat[Shannon's Entropy]{\includegraphics[width=0.3\textwidth]{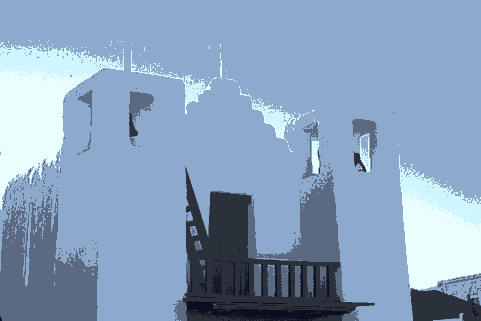}
\label{fig_second_case4}}
\hfil
\subfloat[Tsallis Entropy]{\includegraphics[width=0.3\textwidth]{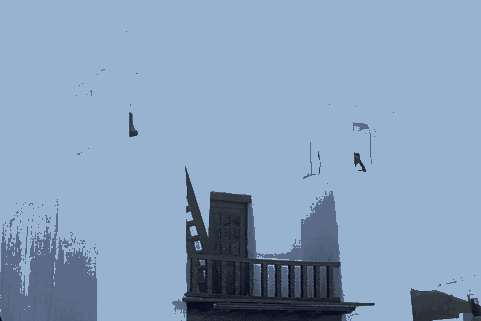}
\label{fig_second_case5}}
\hfil
\subfloat[R\'enyi Entropy]{\includegraphics[width=0.3\textwidth]{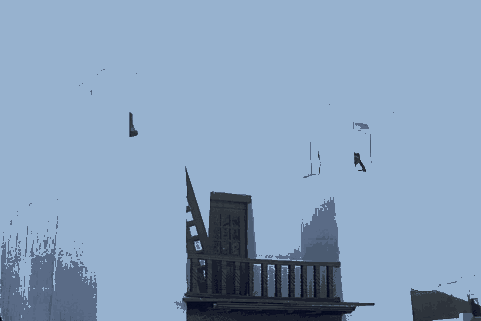}
\label{fig_second_case6}}
\caption{Segmented images obtained by Kapur's algorithm with different entropies on training image $\#24063$ with $16$ segments.}
\label{fig_sim1}
\end{figure}

To test and analyse the performance of the $t$-Entropy, we use all the 500 images from the Berkeley Segmentation Data Set and Benchmark (BSDS 500) \cite{MartinFTM01}. Each image is of size $481 \times 321$. For each image, a set
of segmented ground truth images compiled by the human observers is provided. We use the Probabilistic Rand Index (PRI), Global Consistency Error (GCE) and Variation of Information (VoI) \cite{unnikrishnan2007toward,freixenet2002yet,pantofaru2005comparison} as performance measures. All these complementary measures are considered in order to evaluate the performance of the segmentation methods. A higher value of PRI indicates better segmentation, whereas a lower value of GCE and VoI indicates the same. We run Kapur's algorithm aided with DE for $t$-Entropy (with $c=0.1$), Shannon's entropy, R\'enyi entropy (with $\alpha=2$) and Tsallis entropy (with $q=2$). The PRI, GCE and VoI between the segmentation obtained by Kapur's method and the ground truth segmentation for each image is computed. In Table \ref{tab avg bsds}, we show the average value of these indices for all the aforementioned entropies.

\begin{table}[htb]
\centering
\caption{ Comparison of average benchmark results for Kapur's Algorithm for 
different entropies (computed over 500 images from
BSDS (500) Dataset) (Best results are shown in boldface)}
\label{tab avg bsds}
\begin{tabular}{cccc}
\hline
Entropy & PRI & GCE & VoI\\
\hline
Shannon' Entropy & 0.6313 & 0.4028 & 3.0964\\
R\'enyi Entropy ($\alpha=2$) & 0.6448 & 0.4078 & 3.254\\
Tsali Entropy ($q=2$) & 0.6154 & 0.4124 & 3.1487 \\
$t$-Entropy ($c=0.1$) & \textbf{0.6527} & \textbf{0.3858} & \textbf{3.0695}\\
$t$-Entropy ($c=10$) & 0.6285 & 0.407 & 3.0939\\
$t$-Entropy ($c=50$) & 0.6345 & 0.4006 & 3.0776
\\ \hline
\end{tabular}
\end{table}

It can be easily seen from Table \ref{tab avg bsds} that Kapur's method with $t$-Entropy performs better than that with the other entropies in terms of the PRI, GCE and VoI indices. The outcomes of DE based Kapur's method for all the competiting methods on one of the BSDS (500) images are shown in Fig. \ref{fig_sim1}. It can be easily seen that Kapur's method with $t$-Entropy (with $c=0.1$) is closer to the ground truth than that with the other peer entropies.
\section{Application to Statistics: A Robust Estimator based on \textit{t}-Entropy}\label{stat}
In this section, we will show an application of the $t$-entropy to the statistical point estimation. We will first derive a relative entropy based on the $t$-entropy measure and construct an estimator based on the same. For simplicity, we set $c=1$.\par
Formally, let $\mu$ be a measure on $\mathcal{X}$, which dominates two other measures $P$ and $Q$, i.e. $P,Q\ll \mu$. By the  Radon-Nikodym Theorem \citep{billingsley2008probability}, $P$ and $Q$ possess derivatives $p$ and $q$, i.e. $p=\frac{dP}{d\mu}$ and $q=\frac{dQ}{d\mu}$. The relative entropy between the two measures $P$ and $Q$ is defined as 
\begin{equation}
    D_t(P||Q)=\int_{\mathcal{X}}p \tan^{-1}\bigg(\frac{p}{q}\bigg)d\mu-\frac{\pi}{4}.
\end{equation}
If $P \ll Q$, one can write the above equation as 
\begin{equation}
    D_t(P||Q)=\int_{\mathcal{X}} \tan^{-1}\bigg(\frac{dP}{dQ}\bigg)dP-\frac{\pi}{4}.
\end{equation}
If we take $\mu$ to be the Lebesgue measure and $p$ and $q$ as the probability density functions of $P$ and Q respectively, the relative entropy between $P$ and $Q$ boils down to $D_t(P||Q)=\int_{\mathcal{X}} \tan^{-1}\Big(\frac{p(x)}{q(x)}\Big)p(x)dx-\frac{\pi}{4}=\mathbb{E}_{X\sim P}\bigg[\tan^{-1}\Big(\frac{p(X)}{q(X)}\Big)\bigg]-\frac{\pi}{4}.$ We note that this relative entropy is an $f$-divergence \citep{csiszar1975divergence}, as it can be written as $\int_{\mathcal{X}} f\big(\frac{p(x)}{q(x)}\big)q(x)dx$, where $f(x)=x \tan^{-1}(x)-\pi/4$. We note that $f(1)=0.$ \par
We will now discuss the application of this divergence in the context of point estimation. Suppose $X_1,\dots,X_n$ be independent and identically distributed according to some distribution $p_\theta$ with $\theta \in \Theta$. Our goal is to estimate $\theta$, based on the observed data. One way to get an estimate of $\theta$ is to consider the divergence between the two distributions $\hat{p}_n$ and $p_\theta$. Here $\hat{p}_n$ is an estimate of the distribution based on $x_1,\dots,x_n$. 
We define our estimate for $\theta$ based on the data $x_1,\dots,x_n$ as follows: 
\begin{equation}
\label{estimate}
\hat{\theta}_t=\argmin_{\theta \in \Theta}D_t(\hat{p}_n||p_\theta).
\end{equation}
One can take $\hat{p}_n=\frac{1}{n}\delta(x_i)$, $\delta(\cdot)$ being the Dirac delta function, which denotes the empirical distribution of $x_1,\dots,x_n$. One can also take the kernel density estimator 
based on the data $x_1,\dots,x_n$. 
\subsection{Existence and Consistency of the \textit{t}-Estimator}
Let $\mathcal{G}$ denote the set of all distributions having density w.r.t. some dominating measure $\mu$.
Let $\mathcal{F}=\{F_\theta|\theta \in \Theta\} \subset \mathcal{G}$ be a family of distributions characterized by the parameter $\theta$. Let $\frac{d F_\theta}{d \mu}=f_\theta$. For any distribution $G$ having density $g$ w.r.t. $\mu$, the functional $T(G)$ is defined by the requirement $D_t(G||F_{T(G)})=\inf_{\theta \in \Theta} D_t(G||F_\theta).$
The following theorem asserts the existence and consistency properties of the $t$-estimator.
\begin{thm}
Let the parametric family $\mathcal{F}$ be identifiable and let $\Theta$ be a compact subset of $\mathbb{R}^p$. Let $f_\theta(x)$ be continuous a.s. $[\mu]$. Then,
\begin{enumerate}
    \item for all $G \in \mathcal{G}$, $T(G)$ exists.
    \item if $T(G)$ is unique, then the functional $T$ is continuous at $G$ under the total variation topology \Big(i.e., $T(G_n) \to T(G)$, whenever $\int|g_n-g|d\mu \to 0$. Here $g_n$ is the density of $G_n$\Big). 
    \item $T(F_\theta)=\theta$ for all $\theta \in \Theta$.
\end{enumerate}
\end{thm}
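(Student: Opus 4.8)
The plan is to treat the three claims separately, using throughout two structural features of $D_t$: the integrand is controlled by the bounded factor $\tan^{-1}(\cdot)\le \pi/2$, and $D_t$ is an $f$-divergence generated by the strictly convex $f(x)=x\tan^{-1}(x)-\pi/4$.

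For existence (claim 1), I would first show that $\theta\mapsto D_t(G\|F_\theta)=\int g\,\tan^{-1}(g/f_\theta)\,d\mu-\pi/4$ is continuous on $\Theta$. Taking $\theta_n\to\theta$, the a.s.\ continuity of $\theta\mapsto f_\theta$ gives $f_{\theta_n}\to f_\theta$ a.e., whence $g\,\tan^{-1}(g/f_{\theta_n})\to g\,\tan^{-1}(g/f_\theta)$ a.e.; since this integrand is dominated by $(\pi/2)g\in L^1(\mu)$, the dominated convergence theorem yields the claimed continuity. As $\Theta$ is compact, the continuous map $\theta\mapsto D_t(G\|F_\theta)$ attains its infimum, so $T(G)$ exists.

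For Fisher consistency (claim 3), I would verify $f''(x)=2/(1+x^2)^2>0$ and $f(1)=0$, so $f$ is strictly convex with a unique zero at $1$. Writing $D_t(F_\theta\|F_{\theta'})=\int f(f_\theta/f_{\theta'})\,f_{\theta'}\,d\mu$ and applying Jensen's inequality with reference probability measure $F_{\theta'}$ gives $D_t(F_\theta\|F_{\theta'})\ge f\!\big(\int f_\theta\,d\mu\big)=f(1)=0$, with equality iff $f_\theta/f_{\theta'}$ is a.e.\ constant, i.e.\ $F_\theta=F_{\theta'}$. Identifiability of $\mathcal{F}$ then forces $\theta'=\theta$, so $\theta'=\theta$ is the unique minimizer and hence $T(F_\theta)=\theta$.

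The main obstacle is claim 2, the continuity of $T$, which I would prove by contradiction. Assume $g_n\to g$ in $L^1(\mu)$ but $T(G_n)\not\to T(G)$; by compactness of $\Theta$ extract a subsequence with $T(G_n)\to\theta^*\ne T(G)$ and, refining again, $g_n\to g$ a.e. Optimality of $T(G_n)$ gives $D_t(G_n\|F_{T(G_n)})\le D_t(G_n\|F_{T(G)})$ for each $n$. On the right the parameter is fixed, so the dominated convergence argument above (now with the first argument varying) yields $D_t(G_n\|F_{T(G)})\to D_t(G\|F_{T(G)})$. On the left both arguments vary: $g_n\to g$ a.e.\ and $f_{T(G_n)}\to f_{\theta^*}$ a.e., so the integrand converges a.e.\ to $g\,\tan^{-1}(g/f_{\theta^*})$ while being dominated by $(\pi/2)g_n$ with $(\pi/2)g_n\to(\pi/2)g$ in $L^1$ and a.e. The generalized dominated convergence theorem (Pratt's lemma) then gives $D_t(G_n\|F_{T(G_n)})\to D_t(G\|F_{\theta^*})$. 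Passing to the limit in the inequality yields $D_t(G\|F_{\theta^*})\le D_t(G\|F_{T(G)})$, contradicting the uniqueness of $T(G)$ as minimizer; hence $T(G_n)\to T(G)$. The delicate points, where I expect the real work to lie, are justifying the simultaneous limit in both arguments via a generalized rather than classical convergence theorem, and tracking the convergence of $\tan^{-1}(g_n/f_{T(G_n)})$ on the set where $f_{\theta^*}$ vanishes, where the ratio tends to $+\infty$ and the arctangent to $\pi/2$.
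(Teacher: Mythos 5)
Your proposal is correct, and for parts (1) and (3) it follows essentially the paper's own route: existence via continuity of $t\mapsto D_t(G\|F_t)$ (dominated convergence, using $\tan^{-1}\leq\pi/2$) plus compactness of $\Theta$, and Fisher consistency via identifiability. In fact your part (3) is more complete than the paper's, which merely asserts that $D_t(F_\theta\|F_t)$ vanishes uniquely at $t=\theta$, whereas you prove nonnegativity and the equality case by Jensen's inequality for the strictly convex generator $f(x)=x\tan^{-1}(x)-\pi/4$.

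Part (2) is where you genuinely diverge. The paper's key lemma is a quantitative, uniform-in-$t$ Lipschitz estimate: applying the mean value theorem to $u\mapsto u\tan^{-1}\big(u/f_t(x)\big)$, whose derivative $\tan^{-1}(\xi)+\xi/(1+\xi^2)$ is uniformly bounded, it obtains $\sup_{t\in\Theta}\lvert h_n(t)-h(t)\rvert\leq 2\int\lvert g_n-g\rvert\,d\mu\to 0$, where $h_n(t)=D_t(G_n\|F_t)$ and $h(t)=D_t(G\|F_t)$; uniform convergence immediately gives $h_n(\theta_n)\to h(\theta)$, and a final compactness-plus-continuity step derives a contradiction with uniqueness of $T(G)$. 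You instead run the contradiction from the outset: extract a subsequence along which $T(G_n)\to\theta^*\neq T(G)$ and $g_n\to g$ a.e., then pass to the limit in the optimality inequality $D_t(G_n\|F_{T(G_n)})\leq D_t(G_n\|F_{T(G)})$ via Pratt's generalized dominated convergence theorem, concluding that $\theta^*$ is also a minimizer. Both arguments are sound; note only that the generalized (not classical) convergence theorem is needed on the right-hand side as well, since the dominating functions $(\pi/2)g_n$ vary there too — your closing caveat about "simultaneous limits" effectively covers this, but as stated the right-hand limit is justified a bit loosely. The trade-off: the paper's Lipschitz bound is cleaner and quantitative — no subsequence bookkeeping, no a.e. extraction, and the set where $f_{\theta^*}$ vanishes needs no separate discussion — while your route avoids any derivative computation, follows the standard argmax-consistency template, and would survive verbatim if $\tan^{-1}$ were replaced by any other bounded continuous kernel.
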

\begin{proof}
\textbf{Proof of part (1):}
     Let $t_n \to t$ be a sequence of parameter values in $\Theta$. Then, 
    \begin{align*}
        |D_t(G,F_{t_n})-D_t(G,F_{t})| & =\bigg|\int \bigg[\tan^{-1}\bigg(\frac{g(x)}{f_{t_n}(x)}\bigg)-\tan^{-1}\bigg(\frac{g(x)}{f_{t}(x)}\bigg)\bigg]g(x)d\mu(x)\bigg|\\
        & \leq \int \bigg|\tan^{-1}\bigg(\frac{g(x)}{f_{t_n}(x)}\bigg)-\tan^{-1}\bigg(\frac{g(x)}{f_{t}(x)}\bigg)\bigg|g(x)d\mu(x).
    \end{align*}
    The last term goes to $0$ by a simple application of the Dominated Convergence Theorem (DCT) \citep{billingsley2008probability}. Thus the function $h(t)=D_t(G||F_t)$ is continuous on $\Theta$, which is a compact set. Thus $h(\cdot)$ attains its minimum on $\Theta$.\par
    \textbf{Proof of part (2):}
     Let $\{G_n\}$ converges to $G$ in the total variation sense, i.e. $\int|g_n(x)-g(x)|d\mu(x) \to 0$ as $n \to 0$ Let $h_n(t)=D_t(G_n||F_t)$, $\theta_n=T(G_n)$ and $\theta=T(G)$. We will first show that $h_n(t) \to h(t)$.
     \begingroup
     \allowdisplaybreaks
    \begin{align*}
        |h_n(t)-h(t)|   & =\bigg|\int \bigg[\tan^{-1}\bigg(\frac{g_n(x)}{f_{t}(x)}\bigg)g_n(x)  -\tan^{-1}\bigg(\frac{g(x)}{f_{t}(x)}\bigg)g(x)\bigg]d\mu(x)\bigg|\\
        & = \bigg|\int \bigg[\tan^{-1}(\xi_x)+\frac{\xi_x}{1+\xi_x^2}\bigg](g_n(x)-g(x))d\mu(x)\bigg|.
    \end{align*}
    \endgroup
    Here $\xi_x$ lies between $g(x)$ and $g(x)$. The last equality follows form applying Taylor's on the function $x\tan^{-1}(x)$. We also note that $\bigg|\tan^{-1}(\xi)+\frac{\xi}{1+\xi^2}\bigg|$ is bounded above by 2. Thus,
    $|h_n(t)-h(t)|\leq 2\int|g_n(x)-g(x)|d\mu(x)$ for all $t \in \Theta$. Thus, $\lim_{n \to \infty}\sup_{t \in \Theta}|h_n(t)-h(t)|=0.$ Now note that $ h(\theta)-h_n(\theta_n)\leq h(\theta_n)-h_n(\theta_n)$ and $h_n(\theta)-h(\theta)\leq h_n(\theta)-h(\theta).$ It is thus easy to conclude that
    \begin{align}
    |h_n(\theta_n)-h(\theta)| \nonumber & \leq |h_n(\theta_n)-h(\theta_n)|+|h_n(\theta)-h(\theta)| \nonumber\\
    & \leq 2 \sup_{t \in \Theta} |h_n(t)-h(t)|. \label{m3}
    \end{align}
    Thus, $\lim_{n \to \infty} h_n(\theta_n)=h(\theta)$.\par 
    It remains to be shown that $\theta_n \to \theta$. Assume the contrary. Then, appealing to the compactness of $\Theta$, there exists a subsequence, say $\{\theta_{n_l}\}_{l=1}^\infty$ of $\{\theta_{n}\}_{n=1}^\infty$, such that $\theta_{n_l} \to \theta_1$, where, $\theta_1 \neq \theta$. Since $h$ is continuous, $h(\theta_{n_l}) \to h(\theta_1)$. From \eqref{m3}, we get, $h(\theta_1)=h(\theta)$. This gives us a contradiction, since $T(G)$ is assumed to be unique. Thus, $\theta_n \to \theta$.\par
    \textbf{Proof of part (3):}
        Since, the parametric family $\{F_\theta: \theta \in \Theta\}$ is identifiable, $D_t(F_\theta||F_{t}$) attains the value zero at $t=\theta$, uniquely. Thus $T(F_\theta)=\theta$, uniquely. 
\end{proof}
\subsection{Robustness of the \textit{t}-Estimator}
To theoretically assert the robustness of an estimator, we will use the concept of breakdown point  \citep{hampel1971general,donoho1983notion}. The breakdown point of a functional can be thought of as the smallest proportion of contamination in the data that can cause an arbitrary extreme value in the estimate. To investigate the robustness of the $t$-Estimator, we consider the contaminated sequence of distributions, $H_{\epsilon,n}=(1-\epsilon)G+\epsilon K_n.$
Here $\{K_n\}_{n=1}^\infty$ is some sequence of contaminating distributions and $\epsilon$ is the contaminating proportion. Let the density of $K_n$ w.r.t the Lebesgue measure be $k_n$. Following the notion of Simpson \citep{simpson1987minimum}, we say that a breakdown occurs in $T$ at $\epsilon$ level contamination, if there exists a sequence $K_n$, for which, $|T(H_{\epsilon,n})-T(G)|\to \infty$ as $n \to \infty$.\par
Let $\theta_n=T(H_{\epsilon,n})$. Following the works of Park and Basu \citep{park2004minimum}, we make the following standard assumptions for our breakdown point analysis.
\begin{itemize}
    \item[A1.] $\int \min\{g(x),k_n(x)\} \to 0 $ as $n \to \infty$. 
    \item[A2.] $ \int \min\{f_\theta(x),k_n(x)\} \to 0 $ as $n \to \infty$, uniformly for $|\theta| \leq c$, where $c$ is some fixed positive constant.
    \item[A3.] $ \int \min\{f_{\theta_n}(x),k_n(x)\} \to 0 $ as $n \to \infty$, if $|\theta_n| \to \infty$.
\end{itemize}

\begin{thm}
Under assumptions \textit{A1-A3}, the asymptotic breakdown point $\epsilon^*$ of the $t$-functional is at least $\frac{1}{2}$ at the model.
\end{thm}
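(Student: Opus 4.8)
The plan is to argue by contradiction. Fix a contamination level $\epsilon < \tfrac12$ and suppose a breakdown occurs: there is a contaminating sequence $\{K_n\}$ for which $\theta_n = T(H_{\epsilon,n})$ satisfies $|\theta_n| \to \infty$. I would derive a contradiction with the defining property that $\theta_n$ minimizes $\theta \mapsto D_t(H_{\epsilon,n}\|F_\theta)$, so that in particular $D_t(H_{\epsilon,n}\|F_{\theta_n}) \le D_t(H_{\epsilon,n}\|F_{\theta_0})$, where $\theta_0$ is the true value ($F_{\theta_0}=G$, since we work at the model). The whole argument leans on two facts: the boundedness $D_t \le \tfrac{\pi}{4}$ (equivalently $0 \le \tan^{-1} \le \tfrac{\pi}{2}$), and the asymptotic-singularity assumptions A1--A3, which let me replace the contaminated density $h_{\epsilon,n} = (1-\epsilon)g + \epsilon k_n$ by its two components on disjoint regions.

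First I would obtain an upper bound for the divergence at the truth. Partitioning the space via A1 into $A_n = \{g \ge k_n\}$ and $A_n^c$, on $A_n$ the ratio $h_{\epsilon,n}/g \le 1$ gives $\tan^{-1}(h_{\epsilon,n}/g) \le \tfrac{\pi}{4}$, while on $A_n^c$ I use the crude bound $\tan^{-1} \le \tfrac{\pi}{2}$. Since A1 forces $\int_{A_n}k_n \to 0$ and $\int_{A_n^c}g \to 0$, the two regions carry masses $(1-\epsilon)$ and $\epsilon$, so $\limsup_n D_t(H_{\epsilon,n}\|F_{\theta_0}) \le \tfrac{\pi}{4}(1-\epsilon) + \tfrac{\pi}{2}\epsilon - \tfrac{\pi}{4} = \tfrac{\pi\epsilon}{4}$. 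This is the bounded ``cost of staying at the truth,'' and it is small for small $\epsilon$.

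Next I would lower-bound the divergence along the diverging sequence. On the contamination region, A3 gives $\int \min(f_{\theta_n},k_n)\to 0$, so $f_{\theta_n}$ is negligible exactly where $k_n$ concentrates, forcing $h_{\epsilon,n}/f_{\theta_n}\to\infty$ and $\tan^{-1}\to\tfrac{\pi}{2}$ there; on the region carrying the true mass, identifiability together with $|\theta_n|\to\infty$ drives $f_{\theta_n}$ to zero on compacta, so $\tan^{-1}(h_{\epsilon,n}/f_{\theta_n})\to\tfrac{\pi}{2}$ on that region as well. Adding the two contributions yields $\liminf_n D_t(H_{\epsilon,n}\|F_{\theta_n}) \ge \tfrac{\pi}{2}(1-\epsilon)+\tfrac{\pi}{2}\epsilon-\tfrac{\pi}{4} = \tfrac{\pi}{4}$, with A2 invoked to exclude the intermediate possibility that $\theta_n$ lingers at a bounded but wrong value while appearing to fit the contamination. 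Feeding these into the minimizing inequality and passing to the limit gives $\tfrac{\pi}{4} \le \tfrac{\pi\epsilon}{4}$, i.e. $\epsilon \ge 1$, which contradicts $\epsilon < \tfrac12$. Thus no breakdown can occur for $\epsilon < \tfrac12$ and $\epsilon^\ast \ge \tfrac12$ (the same computation in fact precludes breakdown for every $\epsilon<1$, so the bound $\tfrac12$ is the conventional conservative statement).

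I expect the delicate step to be the lower bound along $\{\theta_n\}$. Turning the informal statement ``$f_{\theta_n}$ lives on a region asymptotically disjoint from both $g$ and $k_n$'' into a rigorous one requires constructing the partition sets from the minima in A1--A3, verifying that the leftover masses vanish, and justifying the interchange of limit and integral by the Dominated Convergence Theorem, using $h_{\epsilon,n}\tan^{-1}(\cdot) \le \tfrac{\pi}{2}\,h_{\epsilon,n}$ as the dominating envelope. The one structural input that must be made explicit is that a diverging parameter sends $f_{\theta_n}$ to zero on the support of the true mass; this does not follow from A1--A3 alone but from identifiability together with continuity of $\theta\mapsto f_\theta$, and it is precisely what pins the estimate to the true component whenever the true mass $(1-\epsilon)$ dominates.
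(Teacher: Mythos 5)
Your overall skeleton---play the minimizing property of $\theta_n$ off against an upper bound at the truth and a lower bound along the diverging sequence---is the same contradiction framework the paper uses, but your lower bound is wrong, and the error sits exactly at the step you flagged as delicate. You claim that on the contamination region $\tan^{-1}\bigl(h_{\epsilon,n}/f_{\theta_n}\bigr)\to\pi/2$, i.e.\ that $f_{\theta_n}$ becomes negligible where $k_n$ concentrates. But the canonical breakdown mechanism is precisely that the model \emph{chases} the contamination: in a location family take $k_n=f_{x_n}$ with $x_n\to\infty$ and $\theta_n\approx x_n$, so that $f_{\theta_n}\approx k_n$ on that region. Then the contamination region contributes about $\epsilon\tan^{-1}(\epsilon)$, not $\tfrac{\pi}{2}\epsilon$, and the limiting divergence is $a_1(\epsilon)=(1-\epsilon)\tfrac{\pi}{2}+\epsilon\tan^{-1}(\epsilon)-\tfrac{\pi}{4}$, which is strictly below $\tfrac{\pi}{4}$ for every $\epsilon\in(0,1]$ (at $\epsilon=\tfrac12$ it equals $\tfrac12\tan^{-1}\tfrac12\approx0.23$). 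This is why the paper, after isolating $A_n=\{g>\max(k_n,f_{\theta_n})\}$, can only get $(1-\epsilon)\tfrac{\pi}{2}$ from $A_n$ and must invoke Jensen's inequality (on the convex map $x\mapsto\tan^{-1}(1/x)$) on $A_n^C$ to obtain the term $\epsilon\tan^{-1}(\epsilon)$; no $\pi/2$ is available there. Your reading is superficially licensed by A3 as printed, $\int\min\{f_{\theta_n},k_n\}\to0$, but that condition would legislate the chase scenario out of existence and is inconsistent with how the paper itself uses A3 (to control $\int_{A_n}f_{\theta_n}$ on the set where $g$ dominates, which is the Park--Basu condition $\int\min\{g,f_{\theta_n}\}\to0$). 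Your substitute for that condition---identifiability plus continuity of $\theta\mapsto f_\theta$---is also insufficient: a mixture family $f_\theta=w(\theta)\phi(\cdot)+(1-w(\theta))\phi(\cdot-\theta)$ with $w(\theta)\to\tfrac12$ is identifiable and continuous, yet $f_{\theta_n}$ never separates from $g=\phi$.

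The symptom you yourself noticed---that your computation ``precludes breakdown for every $\epsilon<1$''---is the giveaway that the bounds have assumed the problem away: with a bounded divergence, both ``stay near the truth'' and ``chase the contamination'' yield finite limits, and the breakdown point is decided by which limit is smaller, not by crude $\pi/4$-versus-$\pi/2$ bookkeeping. Moreover, even after correcting the lower bound to $a_1(\epsilon)$, your upper bound at the truth, $\tfrac{\pi\epsilon}{4}$, is too loose: the inequality $\tfrac{\pi\epsilon}{4}<a_1(\epsilon)$ already fails near $\epsilon\approx0.4$, so it cannot deliver the threshold $\tfrac12$. The paper instead evaluates the competing limit exactly: staying at a fixed parameter gives, at the model, $a_3(\epsilon)=\tfrac{\epsilon\pi}{2}+(1-\epsilon)\tan^{-1}(1-\epsilon)-\tfrac{\pi}{4}$, and no breakdown occurs when $a_3(\epsilon)<a_1(\epsilon)$. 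Writing $m(u)=u\tan^{-1}(1/u)$, this inequality reads $m(\epsilon)<m(1-\epsilon)$, and since $m$ is increasing on $(0,1]$ it holds exactly when $\epsilon<\tfrac12$. The two missing ideas are thus (i) the Jensen step that handles the model-chasing-contamination case, and (ii) the exact evaluation of both competing limits so that the arctan asymmetry between $\epsilon$ and $1-\epsilon$ can be exploited.
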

\begin{proof}
Let there be a sequence $K_n$, for which $|\theta_n| \to \infty$. Let $A_n=\big\{x: g(x)> \max\{k_n(x),f_{\theta_n}(x)\}\big\}$. Thus we have,
\begingroup
\allowdisplaybreaks
\begin{align*}
 D_t(H_{\epsilon,n}||F_{\theta_n}) =&\int_{A_n}\tan^{-1}\bigg(\frac{(1-\epsilon)g(x)+\epsilon k_n(x)}{f_{\theta_n}(x)}\bigg)((1-\epsilon)g(x)+\epsilon k_n(x))dx \\
& + \int_{A_n^C}\tan^{-1}\bigg(\frac{(1-\epsilon)g(x)+\epsilon k_n(x)}{f_{\theta_n}(x)}\bigg)((1-\epsilon)g(x)+\epsilon k_n(x))dx -\pi/4.
\end{align*}
\endgroup
From assumption \textit{A1}, we get, $\int_{A_n}k_n(x)\to 0$ and from A3, we have, $\int_{A_n}f_{\theta_n}(x) \to 0$. For notational simplicity, we define, $C(g(x),f(x))=\tan^{-1}(g(x)/f(x))g(x)$. We note that under the probability measures induced by the densities $k_n(\cdot)$ and $f_{\theta_n}(\cdot)$, $A_n$ converges to a set with zero probability. Thus on $A_n$, $C(h_{\epsilon,n}(x),f_{\theta_n}(x)) \to (1-\epsilon)\frac{\pi}{2}g(x).$ Thus by dominated convergence theorem, 
\begin{equation}
    \bigg|\int_{A_n}C(h_{\epsilon,n}(x),f_{\theta_n}(x))dx - \int_{g>0} (1-\epsilon)\frac{\pi}{2}g(x)dx\bigg| \to 0.
\end{equation}
Thus, $ \big|\int_{A_n}C(h_{\epsilon,n}(x),f_{\theta_n}(x))dx - (1-\epsilon)\frac{\pi}{2}dx\big| \to 0.$
Again from A1 and A3, $\int_{A_n^C}g(x) \to 0$. Again by DCT, $\int_{A_n^C}C(h_{\epsilon,n}(x),f_{\theta_n}(x)) \to \int \tan^{-1}\bigg(\frac{\epsilon k_n(x)}{f_{\theta_n}(x)}\bigg) \epsilon k_n(x)dx \geq \epsilon \tan^{-1} (\epsilon).$
The last inequality follows from applying Jensen's inequality on the function $\tan^{-1}(1/x)$. Thus we have,
\begin{equation}
    \liminf_{n \to \infty} D_t(H_{\epsilon,n}||F_{\theta_n}) \geq (1-\epsilon) \frac{\pi}{2}+\epsilon \tan^{-1} (\epsilon)-\pi/4.
\end{equation}
Let, $a_1(\epsilon)=(1-\epsilon) \frac{\pi}{2}+\epsilon \tan^{-1} (\epsilon)-\pi/4$.
Now let $\theta^*$ be the minimizer of $\int C((1-\epsilon)g(x),f_\theta(x))$. For any fixed $\theta \in \Theta$, we define, $B_n=\{x: k_n(x) > \max(g(x),f_\theta(x))\}$. From \textit{A1}, we get, $\int_{B_n} g(x) \to 0$ and from \textit{A2}, we get, $\int_{B_n} f_\theta(x) \to 0$. Similarly, from \textit{A1} and \textit{A2}, we have $\int_{B_n^C} k_n(x) \to 0$. Thus, under $k_n(\cdot)$, $B_n$ converges to a set with zero probability. Hence, by applying DCT, we get,
\begin{equation}
    \int_{B_n} C(h_{\epsilon,n}(x),f_\theta(x))dx \to \int_{\{x:k_n(x) > 0\}} \frac{\pi}{2} \epsilon k_n(x)dx=\frac{\epsilon \pi}{2}.
\end{equation}
Similarly, {\small $\int_{B_n^C} C(h_{\epsilon,n}(x),f_\theta(x))dx \to \int C((1-\epsilon)g(x),f_\theta(x))dx.   $}
 Hence, we have, 
\begin{equation}
\label{eq23}
    \lim_{n \to \infty} D_t(H_{\epsilon,n}||F_{\theta}) \geq \frac{\epsilon \pi}{2}+\inf_{\theta \in \Theta} \int C((1-\epsilon)g(x),f_\theta(x))dx - \pi/4.
\end{equation}

The equality on \eqref{eq23} holds if $\theta=\theta^*$. If $g(\cdot)=f_{\theta_t}(\cdot)$,
\begin{equation}
    \int C((1-\epsilon)f_{\theta_t}(x),f_\theta(x)) \geq (1-\epsilon) \tan^{-1}(1-\epsilon).
\end{equation}
The equality holds if $\theta^*=\theta_t$ and in that case, $\lim_{n \to \infty} D_t(H_{\epsilon,n}||F_\theta)=\frac{\epsilon \pi}{2}+ (1-\epsilon) \tan^{-1}(1-\epsilon)-\pi/4=a_3(\epsilon).$
Hence, asymptotically, there is no breakdown for a $\epsilon$ level contamination, if $a_2(\epsilon) < a_1(\epsilon)$, which occurs when $\epsilon<\frac{1}{2}.$ 
\end{proof}

\section{Application to Clustering}\label{clustering}
Clustering refers to the task of partitioning a collection of datapoints into some homogeneous groups \citep{wong2015short,xu2005survey}. $k$-means \citep{macqueen1967some} is by far the most popular algorithm for data clustering. Consider a dataset $\mathcal{X}=\{\bx_1,\dots,\bx_n\} \subset \mathbf{R}^p$. In order to partition the dataset into $k$ disjoint groups, $k$-means formulates the problem as the minimization of the following function:
\begin{equation}
\label{kmeans}
    f_{k-means}(\Theta)=\sum_{i=1}^n \min_{1 \leq j \leq k} \|\bx_i-\btheta_j\|_2^2,
\end{equation}
where $\Theta=\{\btheta_1,\dots,\btheta_k\} \subset \mathbb{R}^p$ is the set of all the $k$ centroids. This objective function  can be interpreted as the within cluster sum of squares. Lloyd's algorithm \citep{lloyd1982least} is a popular coordinate descent algorithm to optimize \eqref{kmeans}.   Despite its wide-spread application, $k$-means is notoriously unsuitable for high-dimensional datasets, where only a handful of features are relevant in revealing the cluster structure of the dataset \citep{chakraborty2020detecting}. To tackle this problem, researchers have often resorted to the concept of feature weighting \citep{de2016survey}. 
\par
\citep{huang2005automated} proposed the feature weighted $k$-means ($W$-$k$-means) clustering method, which formulates the clustering problem as the minimization of the following objective function:
\begin{equation}
\label{wkmeans}
f_{W-k-means}(\Theta,\bw)=\sum_{i=1}^n \min_{1 \leq j \leq k} \sum_{l=1}^p w_l^\beta (x_{il}-\theta_{jl})^2,
\end{equation}
where $\bw=(w_1,\dots,w_p)'$ denotes the vector of feature weights. Objective function \eqref{wkmeans} is minimised w.r.t. the constraint that $\sum_{l=1}^{p}w_l=1$ and $w_l \geq 0$ for all $l=1,\dots,p$. \cite{jing2007entropy} further extended this idea to incorporate cluster specific feature weighting along with an entropy regularization on the feature weights. This technique, referred to as Entropy Weighted $k$-means ($EW$-$k$-means), is particularly useful if the clusters lie in different subspaces of $\mathbb{R}^p$. The formulation by \cite{jing2007entropy} of the clustering objective function is as follows:
\begin{align}
f_{EW-k-means}(\Theta,W) & =\sum_{i=1}^n \min_{1 \leq j \leq k} \sum_{l=1}^p W_{jl}  (x_{il}-\theta_{jl})^2-\lambda \sum_{j=1}^kH_{Shannon}(W_{j,\cdot}), \label{EWK}
\end{align}
where $W \in \mathbb{R}^{k \times p}$ denotes the matrix, whose $j$-th row, $W_{j,\cdot}$ contains the feature weights for the $j$-th cluster. The objective function \eqref{EWK} is minimized w.r.t. the following constraints,
\begin{align}
&W_{jl} \geq 0 \text{ for all } j=1,\dots,k \text{ and }  l=1,\dots,p \label{c1}\\
&\sum_{l=1}^pW_{jl}=1 \text{ for all } j=1,\dots,k. \label{c2}
\end{align}
In our formulation, we replace Shannon's entropy with $t$-entropy in \eqref{EWK}. For the sake of simplicity, we take $c=1$. Our clustering objective is thus given by, 
\begin{align}
f(\Theta,W) & =\sum_{i=1}^n \min_{1 \leq j \leq k} \sum_{l=1}^p W_{jl} (x_{il}-\theta_{jl})^2-\lambda \sum_{j=1}^k\sum_{l=1}^{p} W_{jl}\tan^{-1}\bigg(\frac{1}{W_{jl}}\bigg). \label{tclus}
\end{align}
Objective function \eqref{tclus} is minimised subject to the constraints \eqref{c1} and \eqref{c2}.\par
\subsection{Real Data Analysis}
 We consider nine benchmark datasets from the UCI machine learning repository \citep{Dua:2019}, Keel repository \citep{alcala2011keel} and Arizona State University  feature selection repository \citep{li2018feature} to validate the performance of our formulation. A brief description of the datasets along with their sources is provided in Table \ref{tab:source}. In particular the datasets GLIOMA and LIBRAS are quite challenging as $p \gg n$ and $p \approx n $ on these two datasets respectively. As a cluster validation index, we use the Normalized Mutual Information (NMI) \citep{vinh2010information} and the Adjusted Rand Index (ARI) \citep{hubert1985comparing} between the ground truth and the partitioning obtained by the algorithm. A value of 1 indicates complete match and a value of 0 indicates complete mismatch.  To compare our method, we choose the $k$-means, $W$-$k$-means, $EW$-$k$-means with Shannon's entropy and  Minkowski Weighted $k$-means \citep{de2012minkowski}. We run each algorithm 20 times on each of the datasets until convergence and report the average NMI and ARI values in Tables \ref{nmi} and \ref{ARI} respectively. The best performing algorithm for each of the datasets are bold-faced. It can be observed that in terms of both the indices, $EW$-$k$-means with $t$-entropy provides a better clustering than the peer algorithms in most of the benchmark datasets.\par
 \begin{table}
    \centering
    \caption{Source and Description of the Datasets}
    \label{tab:source}
    \begin{tabular}{|l|c|c|c|c|}
    \hline
        Dataset & Source & $n$ & $p$ & $k$   \\
        \hline
         Iris & \href{https://archive.ics.uci.edu/ml/datasets.php}{UCI Repository} & 150 & 4 & 3\\
         WDBC & \href{https://sci2s.ugr.es/keel/category.php?cat=clas}{Keel Repository} & 569 & 30 & 2\\
         Mammographic & \href{https://sci2s.ugr.es/keel/category.php?cat=clas}{Keel Repository} & 830 & 5 & 2\\
         Newthyroid & \href{https://sci2s.ugr.es/keel/category.php?cat=clas}{Keel Repository} & 215 & 5 & 3\\
         Heart & \href{https://sci2s.ugr.es/keel/category.php?cat=clas}{Keel Repository} & 270 & 13 & 2\\
         Hepatitis & \href{https://sci2s.ugr.es/keel/category.php?cat=clas}{Keel Repository} & 80 & 19 & 2\\
         Mice Protein & \href{https://archive.ics.uci.edu/ml/datasets.php}{UCI Repository} & 1080 & 77 & 8\\
         GLIOMA & \href{http://featureselection.asu.edu/index.php}{ASU Repository} & 50 & 4434 & 4\\
         LIBRAS & \href{https://archive.ics.uci.edu/ml/datasets.php}{UCI Repository} & 144 & 90 & 6 \\
         \hline
    \end{tabular}
\end{table}

\begin{table}
    \centering
        \caption{Comparison of NMI Values on Real-Life Datasets (Best results are shown in boldface)}
    \label{nmi}
    \begin{tabular}{|l|c|c|c|c|c|}
    \hline
      Datasets   &  $k$-means & $W$-$k$-means & $MW$-$k$-means & $EW$-$k$-means(Shannon) & $EW$-$k$-means($t$)\\
      \hline
       Iris  & 0.7244(5) & \textbf{0.7885}(1) & 0.7513(4) & 0.7741(2) & 0.7582(3)\\
       WDBC & 0.4636(2.5) & 0.0056(4) & 0.0016(5) & 0.4636(2.5) & \textbf{0.5687}(1)\\
       Mammographic & 0.1074(3) & 0.0194(4) & 0.0074(5) & 0.2339(2) & \textbf{0.2577}(1)\\
       Newthyroid & 0.4031(3) & 0.2625(4) & 0.1516(5) & 0.5072(2) & \textbf{0.6872}(1)\\
       Heart & 0.0174(4.5) & 0.1096(2) & 0.0370(3) & 0.0174(4.5) & \textbf{0.3101}(1)\\
       Hepatitis & 0.0005(5) & 0.1493(2) & 0.0578(3) & 0.0155(4) & \textbf{0.2603}(1)\\
       Mice Protein & 0.2508(3) & 0.2029(4) & 0.0759(5) & 0.2515(2) & \textbf{0.2800}(1)\\
       GLIOMA & 0.4468(2) & 0.4274(3) &  0.3977(4) & 0.2652(5) & \textbf{0.4892}(1)\\
       LIBRAS & 0.5387(5) & 0.5765(3) & 0.5473(4) & 0.5979(2) & \textbf{0.6554} (1)\\
       \hline
       Average Rank & 3.67 & 3 &4.22 & 2.89 & \textbf{1.22} \\
       \hline
    \end{tabular}
\end{table}

\begin{table}
    \centering
        \caption{Comparison of ARI Values on on Real-Life Datasets (Best results are shown in boldface)}
    \label{ARI}
    \begin{tabular}{|l|c|c|c|c|c|}
    \hline
    Datasets   &  $k$-means & $W$-$k$-means & $MW$-$k$-means & $EW$-$k$-means(Shannon) & $EW$-$k$-means($t$)\\
      \hline
       Iris & 0.6707(5) & 0.7484(2) & 0.7027(4) & \textbf{0.7427}(1) & 0.7302(3)\\
       WDBC & 0.4904(2.5) & 0.0127(4) & 0.0005(5) & 0.4904(2.5) &  \textbf{0.6850}(1)\\
       Mammographic & 0.1367(3) & 0.0005(4.5) & 0.0005(4.5) & 0.2433(2) & \textbf{0.3093}(1) \\
       Newthyroid &  0.4827(3) & 0.1641(5) & 0.2554(4) & 0.5502(2) & \textbf{0.7656}(1)\\
       Heart & 0.0264(4) & 0.1323(2) & 0.0445(3) & 0.0262(5) & \textbf{0.4018}(1)\\
       Hepatitis & 0.0168(5) & 0.2711(3) & 0.1099(4) & \textbf{0.4871}(1) & 0.4102(2)\\
       Mice Protein & 0.1390(3) & 0.1033(5) & 0.1193(4) & 0.1483(2) & \textbf{0.1529}(1)\\
       GLIOMA & 0.2806(3) & 0.2881(2) & 0.2488(4) & 0.1078(5) & \textbf{0.3707}(1)\\
       LIBRAS & 0.3588(4) & 0.3789(3) & 0.3458(5) & 0.4729(2) & \textbf{0.5346}(1) \\
       \hline
       Average Rank & 3.61 & 3.39 & 4.17 & 2.28 & \textbf{1.33}\\
       \hline
    \end{tabular}
\end{table}
 \subsection{Case Study on Libras Data}
 We evaluate the performance of various clustering algorithms on the LIBRAS movement dataset. The dataset is collected from the UCI machine learning repository \citep{Dua:2019}. The dataset consists of 15 classes, each class referring to a type of hand movement. Each class contains 24 observations and each observation has 90 features consisting of the coordinates of hand movements. Since there are overlaps between the clusters, as observed by \cite{wang2018sparse}, we consider six clusters: vertical swing (labeled as 3), anti-clockwise arc (labeled as 4), clockwise arc (labeled as 5), horizontal straightline (labeled as 7), horizontal wavy (labeled as 11), and vertical wavy  (labeled as 12) in the original dataset.
 For better visualization, in Fig. \ref{fig_cluster}, we show the $t$-SNE plots \citep{maaten2008visualizing} of the LIBRAS dataset, color-coded with the partition obtained by each of the peer algorithms. It is clear from Fig. \ref{fig_cluster} that the $EW$-$k$-means with $t$-entropy resembles the ground truth, while other peer algorithms fail to do so.
 \begin{figure}[ht]
\centering
\subfloat[Ground Truth]{\includegraphics[width=0.3\textwidth]{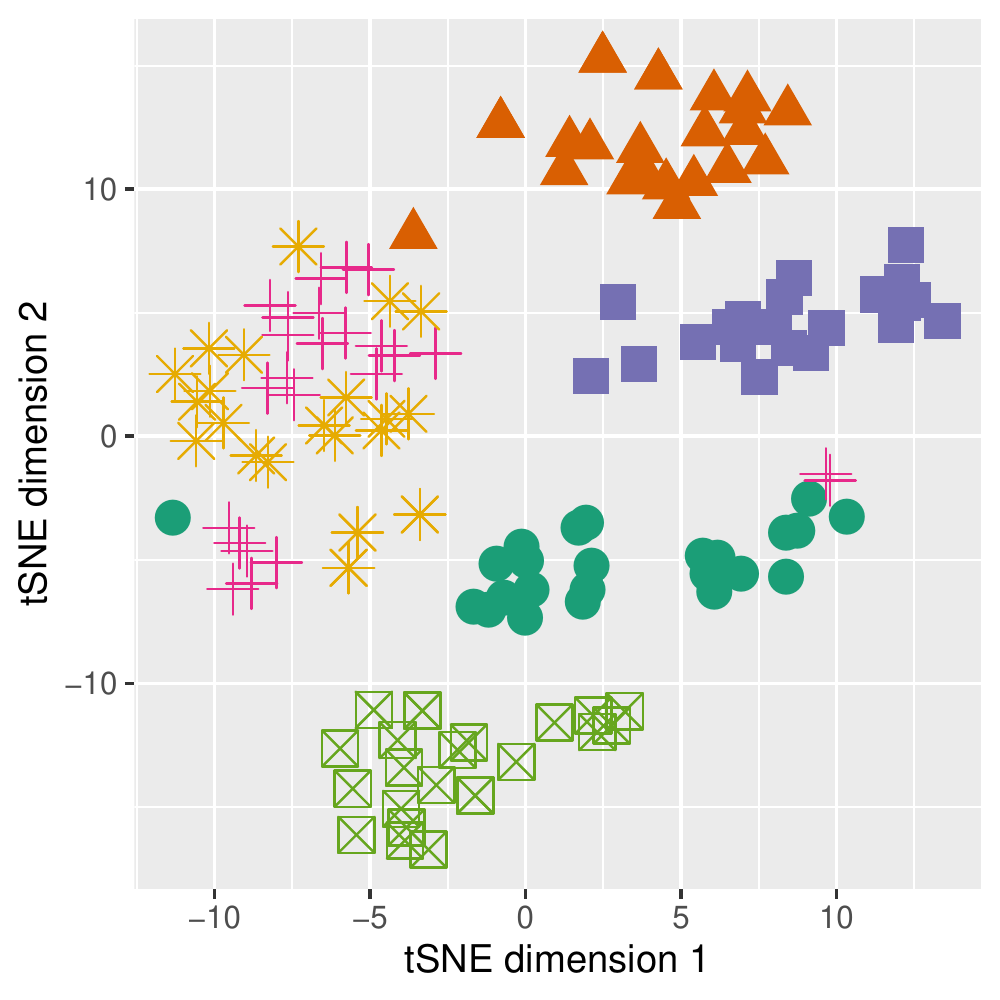}}
\hfil
\subfloat[$k$-means]{\includegraphics[width=0.3\textwidth]{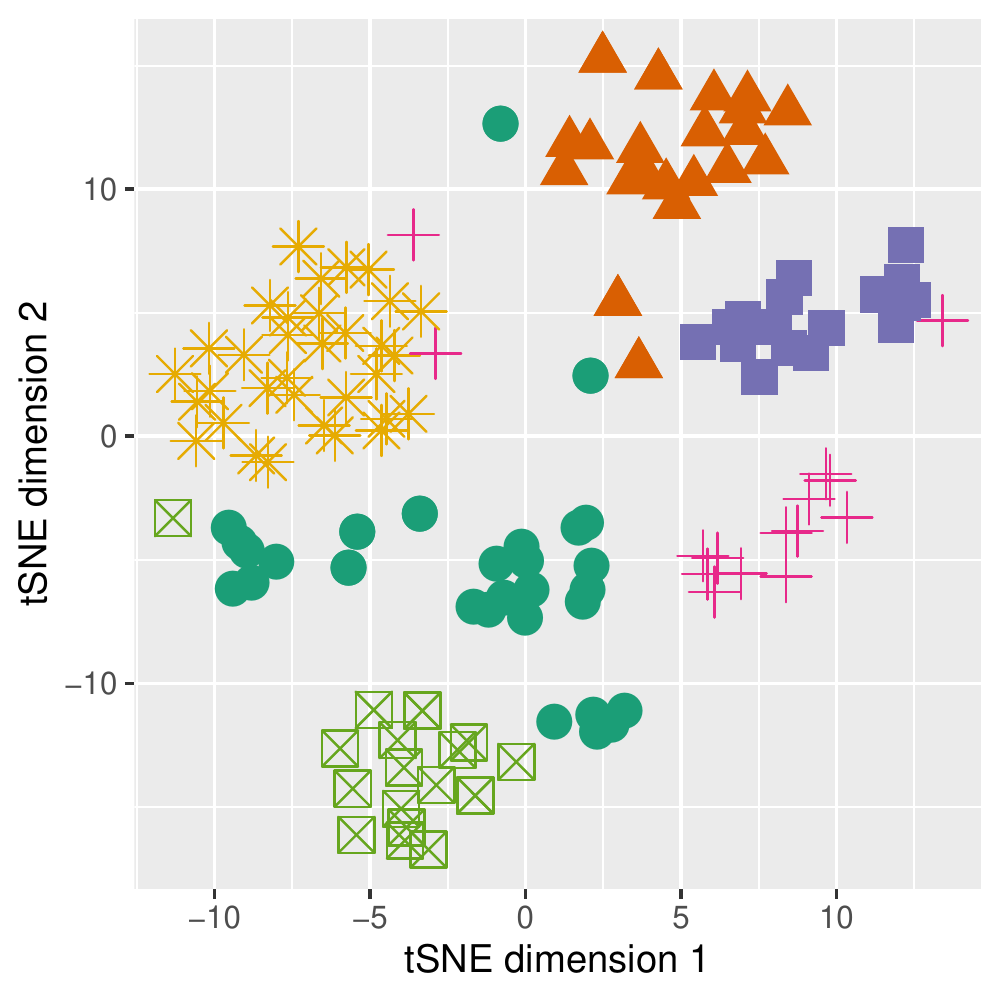}}
\hfil
\subfloat[$W$-$k$-means]{\includegraphics[width=0.3\textwidth]{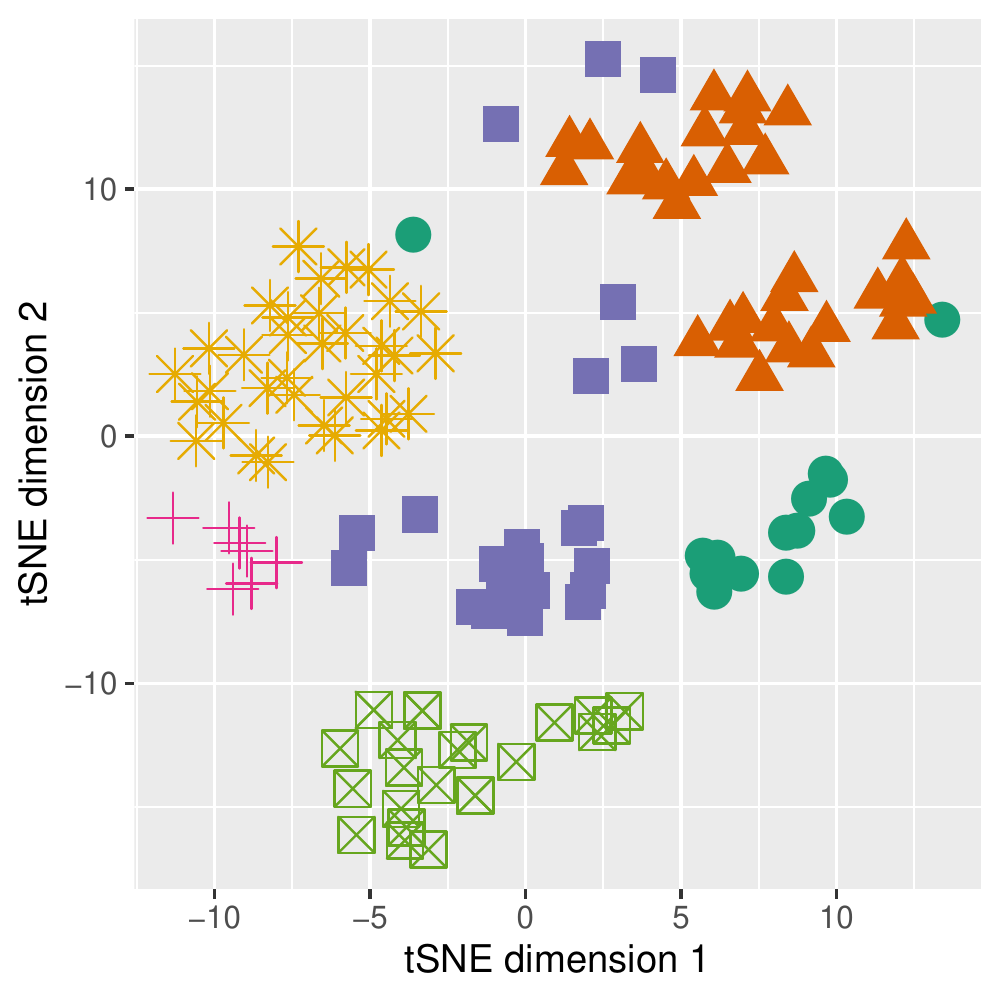}}
\hfil
\subfloat[$MW$-$k$-means]{\includegraphics[width=0.3\textwidth]{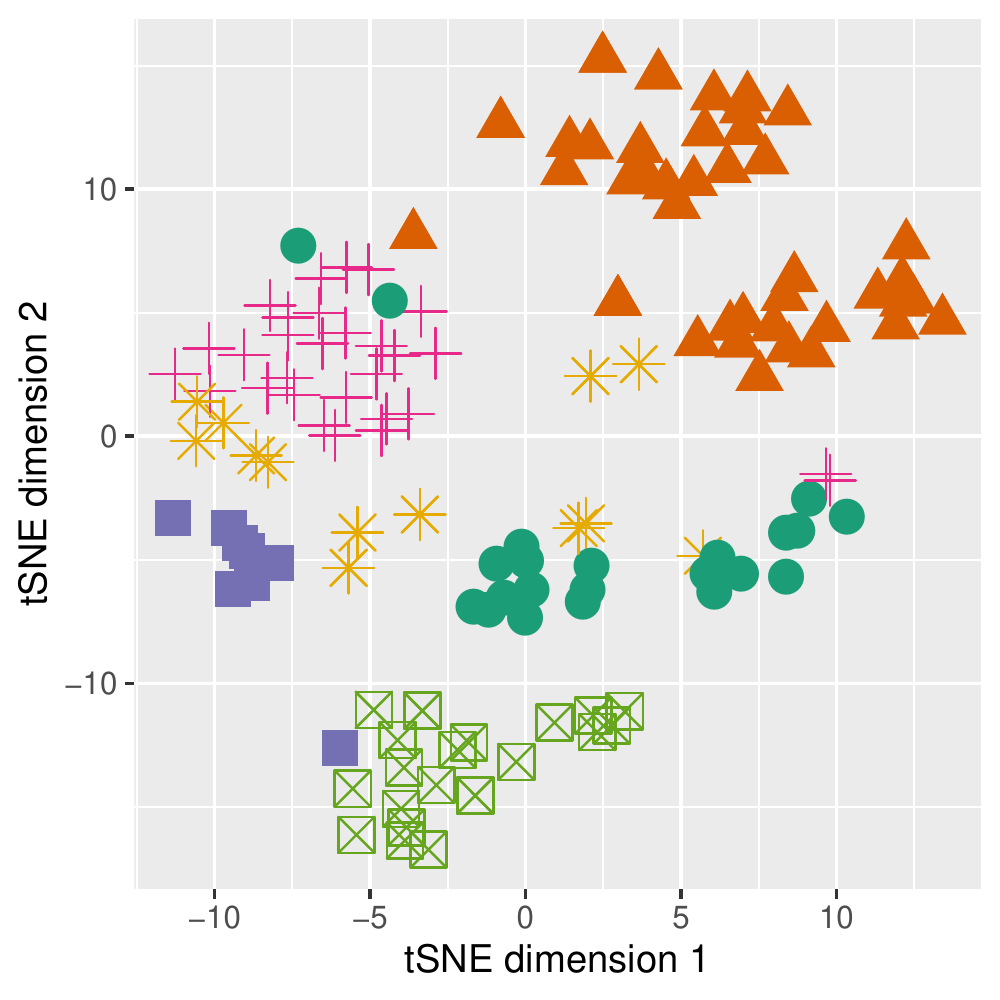}}
\hfil
\subfloat[$EW$-$k$-means (with Shannon's Entropy)]{\includegraphics[width=0.3\textwidth]{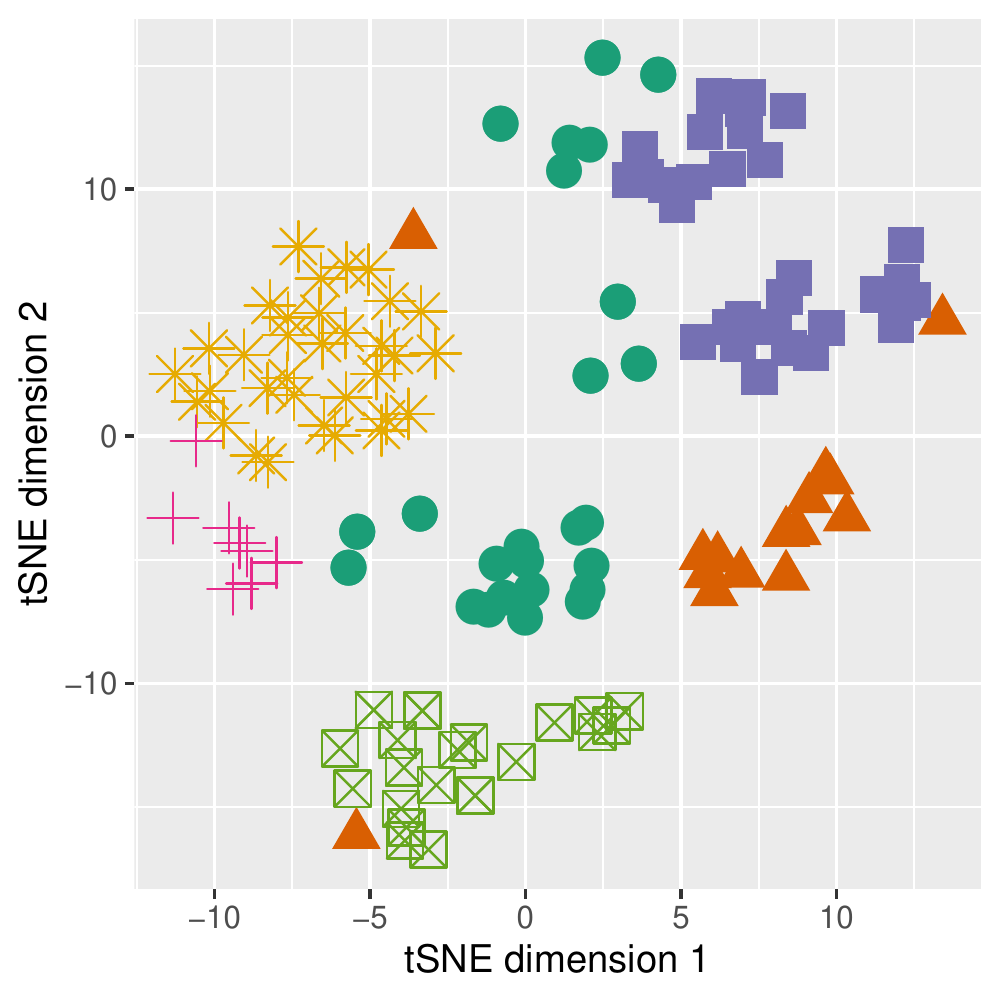}}
\hfil
\subfloat[$EW$-$k$-means (with $t$-Entropy)]{\includegraphics[width=0.3\textwidth]{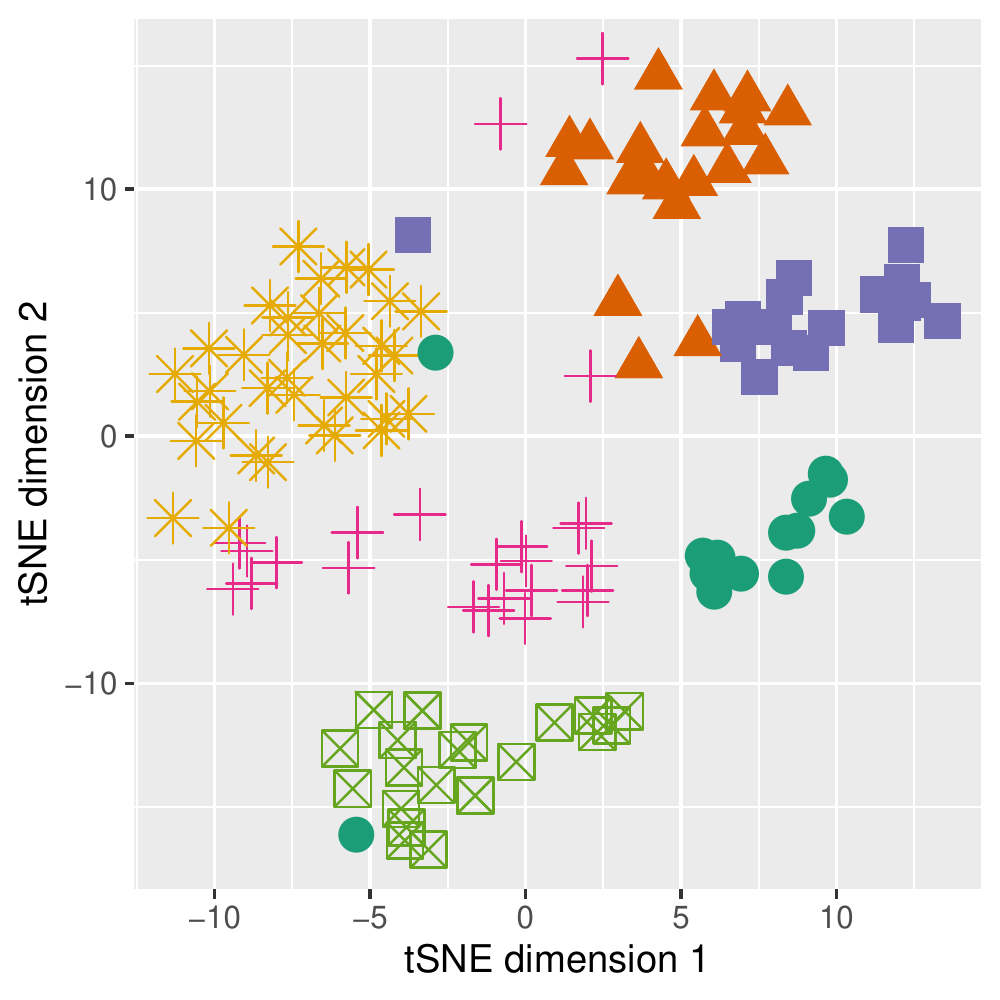}}
\hfil
\centering
\caption{$t$-SNE plots for LIBRAS movement dataset, showing the performance of various clustering algorithms.}
\label{fig_cluster}
\end{figure}
\section{Conclusion}\label{discussion}
We propose a new class of entropy measures, called the $t$-entropy, which does not have any obvious relations to any of the popularly known entropies. We analytically show that the proposed measure satisfies the major axiomatic properties of an entropy. The mathematical properties of the $t$-entropy were also rigorously analyzed. The efficacy of the $t$-entropy is demonstrated on a suit of application including image processing, divergence-based robust point estimation and subspace clustering. The consistency and robustness properties of the $t$-entropy based estimators are theoretically explored. In particular, we show that under standard regularity conditions, the estimator has an asymptotic break-down point of $0.5$, which is desired for robustness against outliers.  One possible extension of our method could be to extend the proposed entropy by using a general class of bounded concave functions. 
The application of $t$-entropy in the context of Power $k$-means clustering 
and sparse signal recovery 
are also some possible avenues for future research.

\bibliographystyle{apalike}
\bibliography{mybib}

\begin{thebibliography}{}

\bibitem[Accardi, 1975]{accardi1975noncommutative}
Accardi, L. (1975).
\newblock On the noncommutative markov property.

\bibitem[Al-Attas and El-Zaart, 2007]{al2007thresholding}
Al-Attas, R. and El-Zaart, A. (2007).
\newblock Thresholding of medical images using minimum cross entropy.
\newblock In {\em 3rd Kuala Lumpur International Conference on Biomedical
  Engineering 2006}, pages 296--299. Springer.

\bibitem[Alcal{\'a}-Fdez et~al., 2011]{alcala2011keel}
Alcal{\'a}-Fdez, J., Fern{\'a}ndez, A., Luengo, J., Derrac, J., Garc{\'\i}a,
  S., S{\'a}nchez, L., and Herrera, F. (2011).
\newblock Keel data-mining software tool: data set repository, integration of
  algorithms and experimental analysis framework.
\newblock {\em Journal of Multiple-Valued Logic \& Soft Computing}, 17.

\bibitem[Audiffren et~al., 2015]{audiffren2015maximum}
Audiffren, J., Valko, M., Lazaric, A., and Ghavamzadeh, M. (2015).
\newblock Maximum entropy semi-supervised inverse reinforcement learning.
\newblock In {\em Twenty-Fourth International Joint Conference on Artificial
  Intelligence}.

\bibitem[Awad and Alawneh, 1987]{awad1987application}
Awad, A.~M. and Alawneh, A.~J. (1987).
\newblock Application of entropy to a life-time model.
\newblock {\em IMA Journal of Mathematical Control and Information},
  4(2):143--148.

\bibitem[Billingsley, 2008]{billingsley2008probability}
Billingsley, P. (2008).
\newblock {\em Probability and measure}.
\newblock John Wiley \& Sons.

\bibitem[Boyd and Vandenberghe, 2004]{boyd2004convex}
Boyd, S. and Vandenberghe, L. (2004).
\newblock {\em Convex optimization}.
\newblock Cambridge university press.

\bibitem[Chakrabarti and Chakrabarty, 2005]{chakrabarti2005shannon}
Chakrabarti, C. and Chakrabarty, I. (2005).
\newblock Shannon entropy: axiomatic characterization and application.
\newblock {\em International Journal of Mathematics and Mathematical Sciences},
  2005(17):2847--2854.

\bibitem[Chakraborty and Das, 2020]{chakraborty2020detecting}
Chakraborty, S. and Das, S. (2020).
\newblock Detecting meaningful clusters from high-dimensional data: A strongly
  consistent sparse center-based clustering approach.
\newblock {\em IEEE Transactions on Pattern Analysis and Machine Intelligence}.

\bibitem[Chakraborty et~al., 2020a]{chakraborty2020automated}
Chakraborty, S., Paul, D., and Das, S. (2020a).
\newblock Automated clustering of high-dimensional data with a feature weighted
  mean shift algorithm.
\newblock {\em arXiv preprint arXiv:2012.10929}.

\bibitem[Chakraborty et~al., 2020b]{chakraborty2020entropy}
Chakraborty, S., Paul, D., Das, S., and Xu, J. (2020b).
\newblock Entropy weighted power k-means clustering.
\newblock In {\em International Conference on Artificial Intelligence and
  Statistics}, pages 691--701. PMLR.

\bibitem[Cover and Thomas, 1991]{cover1991entropy}
Cover, T.~M. and Thomas, J.~A. (1991).
\newblock Entropy, relative entropy and mutual information.
\newblock {\em Elements of information theory}, 2:1--55.

\bibitem[Cover and Thomas, 2012]{cover2012elements}
Cover, T.~M. and Thomas, J.~A. (2012).
\newblock {\em Elements of information theory}.
\newblock John Wiley \& Sons.

\bibitem[Csisz{\'a}r, 1975]{csiszar1975divergence}
Csisz{\'a}r, I. (1975).
\newblock I-divergence geometry of probability distributions and minimization
  problems.
\newblock {\em The Annals of Probability}, pages 146--158.

\bibitem[Cuturi, 2013]{cuturi2013sinkhorn}
Cuturi, M. (2013).
\newblock Sinkhorn distances: Lightspeed computation of optimal transport.
\newblock In {\em Advances in neural information processing systems}, pages
  2292--2300.

\bibitem[De~Amorim, 2016]{de2016survey}
De~Amorim, R.~C. (2016).
\newblock A survey on feature weighting based k-means algorithms.
\newblock {\em Journal of Classification}, 33(2):210--242.

\bibitem[De~Amorim and Mirkin, 2012]{de2012minkowski}
De~Amorim, R.~C. and Mirkin, B. (2012).
\newblock Minkowski metric, feature weighting and anomalous cluster
  initializing in k-means clustering.
\newblock {\em Pattern Recognition}, 45(3):1061--1075.

\bibitem[Donoho and Huber, 1983]{donoho1983notion}
Donoho, D.~L. and Huber, P.~J. (1983).
\newblock The notion of breakdown point.
\newblock {\em A festschrift for Erich L. Lehmann}, 157184.

\bibitem[Dua and Graff, 2017]{Dua:2019}
Dua, D. and Graff, C. (2017).
\newblock {UCI} machine learning repository.

\bibitem[Eberhart and Kennedy, 1995]{eberhart1995new}
Eberhart, R. and Kennedy, J. (1995).
\newblock A new optimizer using particle swarm theory.
\newblock In {\em Micro Machine and Human Science, 1995. MHS'95., Proceedings
  of the Sixth International Symposium on}, pages 39--43. IEEE.

\bibitem[Ferraro et~al., 1999]{ferraro1999representation}
Ferraro, M., Boccignone, G., and Caelli, T. (1999).
\newblock On the representation of image structures via scale space entropy
  conditions.
\newblock {\em IEEE Transactions on Pattern Analysis and Machine Intelligence},
  21(11):1199--1203.

\bibitem[Freixenet et~al., 2002]{freixenet2002yet}
Freixenet, J., Mu{\~n}oz, X., Raba, D., Mart{\'\i}, J., and Cuf{\'\i}, X.
  (2002).
\newblock Yet another survey on image segmentation: Region and boundary
  information integration.
\newblock In {\em European Conference on Computer Vision}, pages 408--422.
  Springer.

\bibitem[Gonzalez and Woods, 2002]{gonzalez2002digital}
Gonzalez, R.~C. and Woods, R.~E. (2002).
\newblock Digital image processing.

\bibitem[Grandvalet and Bengio, 2005]{grandvalet2005semi}
Grandvalet, Y. and Bengio, Y. (2005).
\newblock Semi-supervised learning by entropy minimization.
\newblock In {\em Advances in neural information processing systems}, pages
  529--536.

\bibitem[Gut, 2013]{gut2013probability}
Gut, A. (2013).
\newblock {\em Probability: a graduate course}, volume~75.
\newblock Springer Science \& Business Media.

\bibitem[Hampel, 1971]{hampel1971general}
Hampel, F.~R. (1971).
\newblock A general qualitative definition of robustness.
\newblock {\em The Annals of Mathematical Statistics}, pages 1887--1896.

\bibitem[Havrda and Charv{\'a}t, 1967]{havrda1967quantification}
Havrda, J. and Charv{\'a}t, F. (1967).
\newblock Quantification method of classification processes. concept of
  structural $ a $-entropy.
\newblock {\em Kybernetika}, 3(1):30--35.

\bibitem[Huang et~al., 2005]{huang2005automated}
Huang, J.~Z., Ng, M.~K., Rong, H., and Li, Z. (2005).
\newblock Automated variable weighting in k-means type clustering.
\newblock {\em IEEE Transactions on Pattern Analysis \& Machine Intelligence},
  (5):657--668.

\bibitem[Hubert and Arabie, 1985]{hubert1985comparing}
Hubert, L. and Arabie, P. (1985).
\newblock Comparing partitions.
\newblock {\em Journal of classification}, 2(1):193--218.

\bibitem[Jaynes, 1965]{jaynes1965gibbs}
Jaynes, E.~T. (1965).
\newblock Gibbs vs boltzmann entropies.
\newblock {\em American Journal of Physics}, 33(5):391--398.

\bibitem[Jensen, 1906]{jensen1906fonctions}
Jensen, J. L. W.~V. (1906).
\newblock Sur les fonctions convexes et les in{\'e}galit{\'e}s entre les
  valeurs moyennes.
\newblock {\em Acta mathematica}, 30(1):175--193.

\bibitem[Jing et~al., 2007]{jing2007entropy}
Jing, L., Ng, M.~K., and Huang, J.~Z. (2007).
\newblock An entropy weighting k-means algorithm for subspace clustering of
  high-dimensional sparse data.
\newblock {\em IEEE Transactions on Knowledge \& Data Engineering}, pages
  1026--1041.

\bibitem[Kapur et~al., 1985]{kapur1985new}
Kapur, J.~N., Sahoo, P.~K., and Wong, A.~K. (1985).
\newblock A new method for gray-level picture thresholding using the entropy of
  the histogram.
\newblock {\em Computer vision, graphics, and image processing},
  29(3):273--285.

\bibitem[Khinchin, 2013]{khinchin2013mathematical}
Khinchin, A.~Y. (2013).
\newblock {\em Mathematical foundations of information theory}.
\newblock Courier Corporation.

\bibitem[Kittaneh et~al., 2016]{kittaneh2016average}
Kittaneh, O.~A., Khan, M.~A., Akbar, M., and Bayoud, H.~A. (2016).
\newblock Average entropy: a new uncertainty measure with application to image
  segmentation.
\newblock {\em The American Statistician}, 70(1):18--24.

\bibitem[Kullback, 1959]{kullback1959statistics}
Kullback, S. (1959).
\newblock Statistics and information theory.
\newblock {\em J Wiley Sons, New York}.

\bibitem[Li et~al., 2018]{li2018feature}
Li, J., Cheng, K., Wang, S., Morstatter, F., Trevino, R.~P., Tang, J., and Liu,
  H. (2018).
\newblock Feature selection: A data perspective.
\newblock {\em ACM Computing Surveys (CSUR)}, 50(6):94.

\bibitem[Lloyd, 1982]{lloyd1982least}
Lloyd, S. (1982).
\newblock Least squares quantization in pcm.
\newblock {\em IEEE transactions on information theory}, 28(2):129--137.

\bibitem[Maaten and Hinton, 2008]{maaten2008visualizing}
Maaten, L. v.~d. and Hinton, G. (2008).
\newblock Visualizing data using t-sne.
\newblock {\em Journal of machine learning research}, 9(Nov):2579--2605.

\bibitem[MacQueen et~al., 1967]{macqueen1967some}
MacQueen, J. et~al. (1967).
\newblock Some methods for classification and analysis of multivariate
  observations.
\newblock In {\em Proceedings of the fifth Berkeley symposium on mathematical
  statistics and probability}, volume~1, pages 281--297. Oakland, CA, USA.

\bibitem[Mahmoudi and El~Zaart, 2012]{mahmoudi2012survey}
Mahmoudi, L. and El~Zaart, A. (2012).
\newblock A survey of entropy image thresholding techniques.
\newblock In {\em Advances in Computational Tools for Engineering Applications
  (ACTEA), 2012 2nd International Conference on}, pages 204--209. IEEE.

\bibitem[Martin et~al., 2001]{MartinFTM01}
Martin, D., Fowlkes, C., Tal, D., and Malik, J. (2001).
\newblock A database of human segmented natural images and its application to
  evaluating segmentation algorithms and measuring ecological statistics.
\newblock In {\em Proc. 8th Int'l Conf. Computer Vision}, volume~2, pages
  416--423.

\bibitem[Mitchell, 1998]{mitchell1998introduction}
Mitchell, M. (1998).
\newblock {\em An introduction to genetic algorithms}.
\newblock MIT press.

\bibitem[Nambiar et~al., 1992]{nambiar1992axiomatic}
Nambiar, K., Varma, P.~K., and Saroch, V. (1992).
\newblock An axiomatic definition of shannon's entropy.
\newblock {\em Applied mathematics letters}, 5(4):45--46.

\bibitem[Pantofaru and Hebert, 2005]{pantofaru2005comparison}
Pantofaru, C. and Hebert, M. (2005).
\newblock A comparison of image segmentation algorithms.
\newblock Technical report, Citeseer.

\bibitem[Park and Basu, 2004]{park2004minimum}
Park, C. and Basu, A. (2004).
\newblock Minimum disparity estimation: Asymptotic normality and breakdown
  point results.
\newblock {\em Bulletin of informatics and cybernetics}.

\bibitem[Paul and Das, 2020]{paul2020bayesian}
Paul, D. and Das, S. (2020).
\newblock A bayesian non-parametric approach for automatic clustering with
  feature weighting.
\newblock {\em Stat}, 9(1):e306.

\bibitem[Rao et~al., 2004]{rao2004cumulative}
Rao, M., Chen, Y., Vemuri, B.~C., and Wang, F. (2004).
\newblock Cumulative residual entropy: a new measure of information.
\newblock {\em IEEE transactions on Information Theory}, 50(6):1220--1228.

\bibitem[R{\'e}nyi, 1961]{renyi1961measures}
R{\'e}nyi, A. (1961).
\newblock On measures of entropy and information.
\newblock Technical report, HUNGARIAN ACADEMY OF SCIENCES Budapest Hungary.

\bibitem[Salomon, 2007]{salomon2007concise}
Salomon, D. (2007).
\newblock {\em A concise introduction to data compression}.
\newblock Springer Science \& Business Media.

\bibitem[Sarkar et~al., 2015]{sarkar2015multilevel}
Sarkar, S., Das, S., and Chaudhuri, S.~S. (2015).
\newblock A multilevel color image thresholding scheme based on minimum cross
  entropy and differential evolution.
\newblock {\em Pattern Recognition Letters}, 54:27--35.

\bibitem[Sarkar et~al., 2011]{sarkar2011differential}
Sarkar, S., Patra, G.~R., and Das, S. (2011).
\newblock A differential evolution based approach for multilevel image
  segmentation using minimum cross entropy thresholding.
\newblock In {\em International Conference on Swarm, Evolutionary, and Memetic
  Computing}, pages 51--58. Springer.

\bibitem[{\c{S}}ENG{\"U}R et~al., 2006]{csengur2006comparative}
{\c{S}}ENG{\"U}R, A., T{\"U}RKO{\u{G}}LU, {\.I}., and Ince, M.~C. (2006).
\newblock A comparative study on entropic thresholding methods.
\newblock {\em IU-Journal of Electrical \& Electronics Engineering},
  6(2):183--188.

\bibitem[Sezgin and Sankur, 2004]{sezgin2004survey}
Sezgin, M. and Sankur, B. (2004).
\newblock Survey over image thresholding techniques and quantitative
  performance evaluation.
\newblock {\em Journal of Electronic imaging}, 13(1):146--166.

\bibitem[Shannon, 1948]{shannon1948mathematical}
Shannon, C.~E. (1948).
\newblock A mathematical theory of communication.
\newblock {\em Bell system technical journal}, 27(3):379--423.

\bibitem[Sharma and Mittal, 1975]{sharma1975new}
Sharma, B.~D. and Mittal, D.~P. (1975).
\newblock New non-additive measures of entropy for discrete probability
  distributions.
\newblock {\em J. Math. Sci}, 10:28--40.

\bibitem[Sharpe et~al., 1998]{sharpe1998investments}
Sharpe, W., Alexander, G.~J., and Bailey, J.~W. (1998).
\newblock Investments.
\newblock {\em TERRA ECONOMICUS}.

\bibitem[Simpson, 1987]{simpson1987minimum}
Simpson, D.~G. (1987).
\newblock Minimum hellinger distance estimation for the analysis of count data.
\newblock {\em Journal of the American statistical Association},
  82(399):802--807.

\bibitem[Storn and Price, 1997]{storn1997differential}
Storn, R. and Price, K. (1997).
\newblock Differential evolution--a simple and efficient heuristic for global
  optimization over continuous spaces.
\newblock {\em Journal of global optimization}, 11(4):341--359.

\bibitem[Suyari, 2004]{suyari2004generalization}
Suyari, H. (2004).
\newblock Generalization of shannon-khinchin axioms to nonextensive systems and
  the uniqueness theorem for the nonextensive entropy.
\newblock {\em IEEE Transactions on Information Theory}, 50(8):1783--1787.

\bibitem[Tsallis, 1988]{tsallis1988possible}
Tsallis, C. (1988).
\newblock Possible generalization of boltzmann-gibbs statistics.
\newblock {\em Journal of statistical physics}, 52(1-2):479--487.

\bibitem[Unnikrishnan et~al., 2007]{unnikrishnan2007toward}
Unnikrishnan, R., Pantofaru, C., and Hebert, M. (2007).
\newblock Toward objective evaluation of image segmentation algorithms.
\newblock {\em IEEE Transactions on Pattern Analysis \& Machine Intelligence},
  pages 929--944.

\bibitem[Villani, 2008]{villani2008optimal}
Villani, C. (2008).
\newblock {\em Optimal transport: old and new}, volume 338.
\newblock Springer Science \& Business Media.

\bibitem[Vinh et~al., 2010]{vinh2010information}
Vinh, N.~X., Epps, J., and Bailey, J. (2010).
\newblock Information theoretic measures for clusterings comparison: Variants,
  properties, normalization and correction for chance.
\newblock {\em Journal of Machine Learning Research}, 11(Oct):2837--2854.

\bibitem[Wang et~al., 2018]{wang2018sparse}
Wang, B., Zhang, Y., Sun, W.~W., and Fang, Y. (2018).
\newblock Sparse convex clustering.
\newblock {\em Journal of Computational and Graphical Statistics},
  27(2):393--403.

\bibitem[Wang and Suen, 1984]{wang1984analysis}
Wang, Q.~R. and Suen, C.~Y. (1984).
\newblock Analysis and design of a decision tree based on entropy reduction and
  its application to large character set recognition.
\newblock {\em IEEE Transactions on Pattern Analysis \& Machine Intelligence},
  pages 406--417.

\bibitem[Wong, 2015]{wong2015short}
Wong, K.-C. (2015).
\newblock A short survey on data clustering algorithms.
\newblock In {\em 2015 Second International Conference on Soft Computing and
  Machine Intelligence (ISCMI)}, pages 64--68. IEEE.

\bibitem[Xu and Wunsch, 2005]{xu2005survey}
Xu, R. and Wunsch, D.~C. (2005).
\newblock Survey of clustering algorithms.
\newblock {\em IEEE Transactions on Neural Networks}.

\end{thebibliography}

\appendix
\section*{Appendices}
\section{Proofs from section \ref{property}}\label{a1}
In this section, we discuss the proofs of the properties of $t$-entropy. Before we proceed, let us first prove the following lemmas.
\label{S3}
\paragraph{Lemma \ref{l:1}}
The function $f(x)=\tan^{-1}\left(\frac{1}{x}\right)$ is convex on $[0,\infty)$.
\begin{proof}
We have, $f'(x) = - \frac{1}{1+x^2}$ and thus, $f''(x) =  \frac{2x}{(1+x^2)^2} \geq 0$ for all $x \geq 0$. Thus $f(x)=\tan^{-1}\bigg(\frac{1}{x}\bigg)$ is convex on $[0,\infty)$.
\end{proof}
\paragraph{Lemma \ref{l:2}}
For any $c>0$, the function $f(x)=x \tan^{-1}(\frac{1}{x^c})$ is concave on $[0,1]$. 
\begin{proof}
By some easy algebra, we have, 
\begin{align*}
    f'(x) & = \tan^{-1}\bigg(\frac{1}{x^c}\bigg)-\frac{c x^c}{1+x^{2c}}
\end{align*}
and
\begin{align*}
    f''(x) & = - \frac{c x^{c-1}}{(1+x^{2c})^2} \times (1+x^2c+c(1-x^c)).
\end{align*}
If $x \in [0,1]$, we have $f''(x) \leq 0$. Hence the result.
\end{proof}

\paragraph{Lemma \ref{lem3}}
Convex combination of finite number of concave functions is a concave function. 
\begin{proof}
Let, $g:\mathbb{R}\rightarrow \mathbb{R}$ be a convex combination of $n$ concave functions $f_i:\mathbb{R}\rightarrow \mathbb{R} \forall i\in \{1,\dots,n\}$, i.e., let $g(x)=\sum_{i=1}^n\alpha_if_i(x)$, $0\leq \alpha_i \leq 1$, $\sum_i \alpha_i=1$. Then for $0\leq \lambda \leq 1$, $g(\lambda x+(1-\lambda)y)=\sum_{i=1}^n\alpha_if_i(\lambda x+(1-\lambda)y)\geq \sum_{i=1}^n\alpha_i\bigg(\lambda f_i(x)+(1-\lambda)f_i(y)\bigg)=\lambda \sum_{i=1}^n\alpha_if_i(x)+(1-\lambda)\sum_{i=1}^n\alpha_if_i(y)=\lambda g(x)+(1-\lambda)g(y)$. Thus $g$ is a concave function.
\end{proof}

\section{Example from Binomial Distribution}
In this example, we consider the estimation for a binomial model. Let $X_1,\dots,X_n$ be i.i.d. from $Binomial(N,\theta)$, where $N$ is known. Our goal is to estimate $\theta$. For our experiment, we take $N=100$, $n=200$ and $\theta=0.2$. The large sample size of $n=100$ should help us estimate $\theta$ with a good precision. To make the problem more difficult, we deliberately add 10 outliers, drawn from the set $\{91,92,\dots,100\}$. Let $\hat{\theta}_{mle}$ be the M.L.E. of $\theta$ and $\hat{\theta}_{t}$ be the estimate obtained by applying Eqn ($5$). We generate 100 such datasets and for each of them, we compute $\hat{\theta}_{mle}$ and $\hat{\theta}_{t}$ and plot the obtained histograms in Fig. \ref{fig:binom}. The kernel density estimates for the distribution of both $\hat{\theta}_{mle}$ and $\hat{\theta}_{t}$ are shown in Fig. \ref{fig:binom}. It can be easily observed from Fig. \ref{fig:binom} that $\hat{\theta}_t$ is concentrated around the true parameter value of $\theta=0.2$, whereas $\hat{\theta}_{mle}$ consistently overestimates $\theta$.
\begin{figure}[ht]
    \centering
    \includegraphics[height=0.4\textwidth,width=0.4\textwidth]{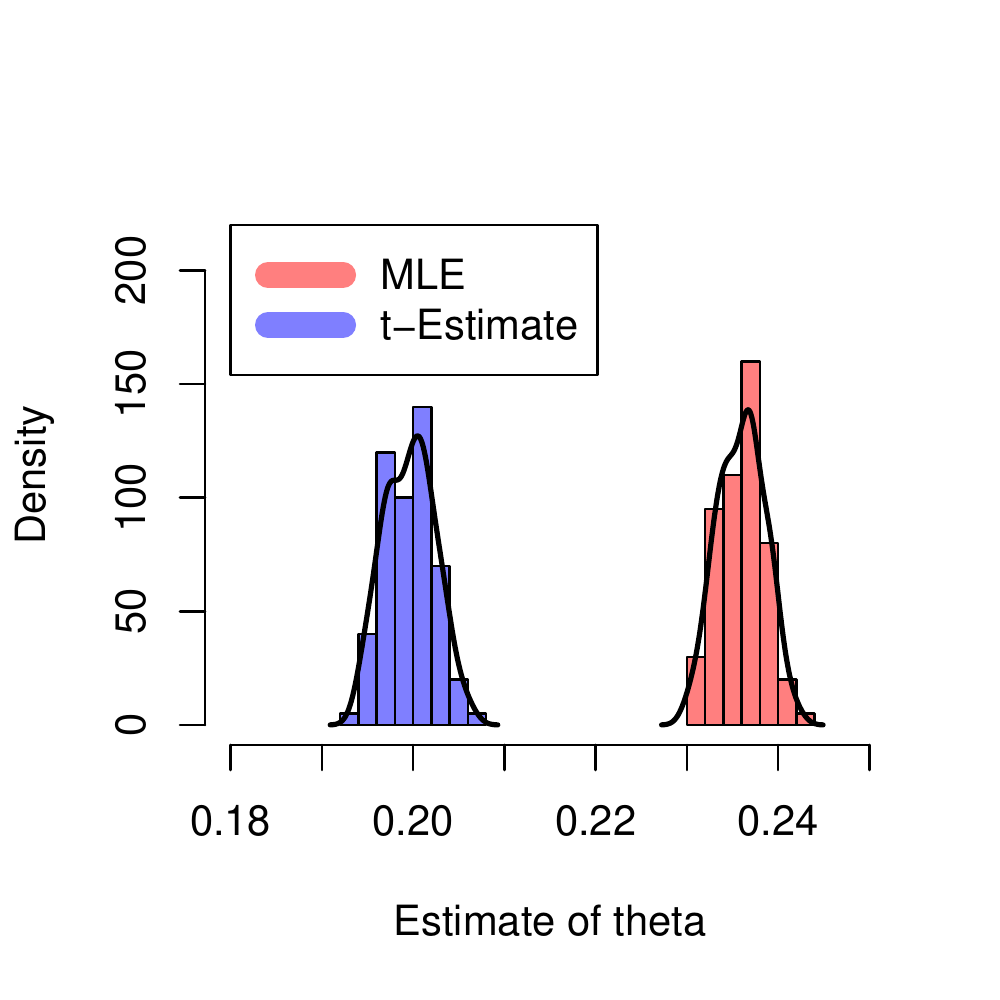}
    \caption{Histogram of $t$-estimates and MLE's for $\theta=0.2$ with $5\%$ contamination, showing that the $t$-estimate is more robust than MLE.}
    \label{fig:binom}
\end{figure}

\end{document}